\numberwithin{equation}{section}
\newtheorem{lemma}{Lemma}
\newtheorem{theorem}{Theorem}
\date{}
\title{Optimized Fock space in the large $N$ limit of quartic interactions in Matrix Models}
\author{ Mariusz Hynek\footnote{mkhynek@kth.se}}
\affil{Department of Mathematics,\\
KTH Royal Institute of Technology,\\
100 44 Stockholm,
Sweden}
\begin{document}
\maketitle

\abstract{We consider the problem of quantization of the bosonic membrane via the large $N$ limit of its matrix regularizations $H_N$ in Fock space. We prove that there exists a choice of the Fock space frequency such that $ H_N$ can be written as a sum of a non-interacting Hamiltonian $H_{0,N}$ and the original normal ordered quartic potential. Using this decomposition we obtain upper and lower bounds for the ground state energy in the planar limit, we study a perturbative expansion about the spectrum of $H_{0,N}$, and show that the spectral gap remains finite at $N=\infty$ at least up to the second order. We also apply the method to the $U(N)$-invariant anharmonic oscillator, and demonstrate that our bounds agree with the exact result of Brezin et al.}

\newpage
\tableofcontents

\section{Introduction}
Let us consider the classical internal energy of the bosonic membrane, which in a light-cone description in orthonormal gauge can be written as (for more details see e.g. \cite{relmemb})
\begin{align}
\mathbb{M}^2&= \int_{\Sigma} \left( \frac{\vec{p}^2}{\rho}+\rho\sum_{i<j}\lbrace x_i, x_j\rbrace\right) d^{2}\varphi,
\end{align}
with the constraints 
\begin{equation}
\sum_{i=1}^d \lbrace x_i,p_i\rbrace=0,\label{contraints continuum}
\end{equation}
where the integral is performed over a 2-dimensional compact manifold $\Sigma$ and $\lbrace f,g\rbrace:=\frac{1}{\rho(\varphi)}(\partial_1 f \partial_2g-\partial_2 f \partial_1 g)$ denotes the Poisson bracket. It is convenient to use the mode expansions $x_i(\varphi)= x_{i \alpha} Y_{\alpha}(\varphi),~~ p_j(\varphi)= p_{j \alpha} Y_{\alpha}(\varphi)$ in terms of the eigenfunctions $\lbrace Y_{\alpha}\rbrace_{\alpha=1}^{\infty}$ of the Laplace operator on $\Sigma$, where the zero modes are subtracted. This allows to rewrite $\mathbb{M}^2$ as an infinite sum over the internal modes
\begin{align}
\mathbb{M}^2&=p_{i \alpha}p_{i \alpha} +\frac{1}{2}g_{\alpha \beta \gamma} g_{\alpha \beta' \gamma'}x_{i\beta} x_{i\beta'}x_{j\gamma} x_{j\gamma'},\label{field theory}\\
g_{\alpha \beta \gamma}&:=\int{Y_{\beta}\epsilon^{a b}\partial_{a}Y_{\alpha}\partial_{b}Y_{\gamma}d^{2}\varphi},~~i=1,...,d, ~~~\alpha, \beta, \gamma=1,...,\infty.
\end{align}

It has been shown by Goldstone and Hoppe  \cite{Hoppe_phd, hoppe02} that the full field-theoretic Hamiltonian (\ref{field theory}) admits a regularization procedure, where the classical phase-space variables $x_i(\varphi), p_j(\varphi)$ are replaced by $n$-dimensional matrices, the Poisson bracket by the matrix commutator and integrals over $\Sigma$ by the matrix trace. The original volume-preserving diffeomorphisms symmetry of $\Sigma$, represented by (\ref{contraints continuum}), is recovered in the $n \rightarrow \infty$ limit from the $SU(n)$ invariance of its matrix regularizations. The family of $n$-dimensional matrix models constructed in this way reads 
\begin{equation}
H_N= Tr(\vec{P}^2)-(2 \pi n)^2 n\sum_{i<j}^d Tr([X_i,X_j]^2),
\label{matrix model}
\end{equation}
with the $SU(n)$ invariance constraints
\begin{equation}
 \sum_{i=1}^d  [X_i,P_i]=0,
\end{equation}
where $P_i, X_i$ are hermitian traceless $n \times n$ matrices. The scaling factor in front of the quartic potential is chosen in such a way that $\lim_{N \rightarrow \infty} H_N=\mathbb{M}^2$.

Using a basis of $su(n)$, $T_a$, $a=1,...,n^2-1:=N$, with $Tr(T_a T_b)=  \delta_{ab}$ and $[T_a,T_b]=i \hbar_n \frac{1}{\sqrt{n}} f^{(n)}_{a b c} T_c$,  $\hbar_n = \frac{1}{2 \pi n}$, $f^{(n)}_{abc}=\frac{2 \pi n^{\frac{3}{2}}}{i}Tr(T_a \left[T_b,T_c \right])$, we can rewrite $H_N$ (and the constraints) in terms of $d(n^2-1)$ canonical pairs $p_{ia}, x_{ia}$ ($X_i=x_{ia}T_a, P_i=p_{ia}T_a$) as a \emph{finite} sum over the matrix modes, cp.(\ref{field theory}),
\begin{align}
H_N(p,x) &= p_{ia} p_{ia} +\frac{1}{2} f^{(n)}_{abc}f^{(n)}_{ab'c'}x_{ib}x_{ib'}x_{jc}x_{jc'},
\label{classmembr}
\\ f_{abc}^{(n)}x_{ib}p_{jc}&=0.
\label{constraints}
\end{align}

In contrast to string theory, the Hamiltonian of the membrane with its quartic interaction makes the problem of quantisation rather difficult. One approach, which was proposed in the literature many years ago \cite{Hoppe_phd, hoppe02}, is to take advantage of the symmetry preserving matrix regularizations (\ref{matrix model})/(\ref{classmembr}), quantize it for finite $N$ and then take the limit $N \rightarrow \infty$. While it has been proved that all finite $N$ expressions are well defined Schrödinger operators \footnote{See \cite{douglas thesis} for a discussion of the spectrum, including the supersymmetric version of the model and related issues} with purely discrete spectrum \cite{simon,Luscher}, it seems that almost nothing is known about the large $N$ limit (apart from the case $d=1$, where the quartic potential vanishes \cite{33}). In particular, it is interesting to ask whether the spectrum remains purely discrete in the limit and how it scales with $N$ and $d$. In this paper we study the large $N$ behaviour of the canonically quantized family of Hamiltonians (\ref{classmembr}) based on a Fock space description in order to approach those questions. \\
According to common knowledge, in order to quantize a system with many degrees of freedom one should rescale the quartic interaction by the suitable power\footnote{The power of $n$ for a quartic interaction it is usually argued to be $-1$, (the t' Hooft coupling \cite{tHooft}), however note that due to the definition of $f_{abc}^{(n)}$ which contains an explicit factor of $n^{\frac{3}{2}}$, the coupling constant in front of our potential should be multiplied by $n^{-4}$. In Section 4 we will show that this also follows from our construction} of $n$ to make the quadratic part competitive at large $n$. This can be realized as a rescaling of the classical phase-space variables preserving the canonical Poisson-commutation relations (and the form of the constraints), i.e. $x_{ia} \rightarrow N^{-\alpha} x_{ia}, p_{ia} \rightarrow N^{\alpha} p_{ia}$, which leads to the following classical energy
\begin{equation}
H_N(p,x)=p_{ia} p_{ia} +N^{-\gamma}\frac{1}{2} f^{(n)}_{abc}f^{(n)}_{ab'c'}x_{ib}x_{ib'}x_{jc}x_{jc'},
\label{generalresc}
\end{equation}
where by abuse of notation we denote the rescaled operator $N^{-2 \alpha} H_N$ by the same symbol $H_N$ and $\gamma:=6 \alpha$.
The canonically quantized expression corresponding to (\ref{generalresc} ) is a priori ill-defined in the large $n$ limit due to possible divergences coming from the infinite vacuum energy and therefore it needs to be renormalized by subtracting a multiple of the identity operator. This is a rather common property of quantum models with many degrees of freedom, so in order to see the general pattern let us consider a whole class of Hamiltonians parametrized by a sequence of real tensors $c^{(N)}_{IJKL}$, $I,J,K,L=1,...,N$:
\begin{equation}
H_N(p,x)=\sum_{I \in \mathcal{J}_N} p_I p_I  + \sum_{I \in \mathcal{J}_N} \omega_{0I}^2 x_I x_I + N^{- \gamma} \sum_{I,J,K,L \in \mathcal{J}_N} c^{(N)}_{IJKL}x_I x_J x_K x_L,
\label{general}
\end{equation}
where $ \mathcal{J}_N$ is the index set having the form of a cartesian product of two discrete sets $ \mathcal{J}_N= \mathcal{D} \times \mathcal{K}_n$, with $|\mathcal{D}|=d=const$, $|\mathcal{K}_n|=N  \nearrow \infty$. For instance, for the Membrane Matrix Models (MMM) (\ref{classmembr}) we have $N=n^2-1$, $\mathcal{K}_n= \lbrace1,...,n^2-1 \rbrace $, $\mathcal{D}=\lbrace 1,...,d\rbrace $, $\omega_{0I}=0$ and $c_{IJKL}\equiv c_{(bi)(b'i')(cj)(c'j')}=\frac{1}{2} f_{abc}f_{ab'c'}\delta_{ii'}\delta_{jj'}$.\\

It is known that a Fock space description of local Hamiltonians of the form $p^2+q^2+V(x)$ is rather inconvenient due to the occurence of terms containing only annihilation or only creation operators, which implies that the so-called single trace sector is not invariant under the action of the Hamiltonian and the true vacuum is not a simple gaussian in the variables $x_I$ (see \cite{Fock N} for a review). Therefore the problem of finding eigenvalues becomes in most cases extremely difficult due to the rapidly increasing with $N$ number of relevant degrees of freedom. We will however show that despite all of that, one can get certain spectral properties of local Hamiltonians (\ref{general}) using Fock space methods, such as upper and lower bounds for the spectrum in the planar limit, as well as a qualitative picture of the subtle interplay between the quadratic terms and the interaction leading to a redefinition of the vacuum energy and mass. The tools described in this paper are of high relevance especially for multi matrix-models where other methods based on the diagonalisation of the matrix degrees of freedom and integrating out the "angle" variables are not directly applicable, like for the Membrane Matrix Models (MMM).

The rest of the paper is organized as follows.
In the next section we will introduce the notion of optimized Fock space providing a very convenient decomposition of the quantum Hamiltonian corresponding to (\ref{general}) and giving directly the gaussian variational bound for the ground state energy. The idea consists in tuning the frequencies of the harmonic oscillators, whose eigenfunctions are used as a basis of the corresponding Fock space. We give an algorithm how to choose an optimal set of these frequencies such that the form of the Hamiltonian (\ref{general}) expressed in the language of creation and annihilation operators is the simplest possible. This almost trivial observation (usually not spelled out explicitly in the literature though, see however \cite{beyond}), based on the fact that the optimal Fock space frequencies in the quantum representation of (\ref{general}) are not always the $\omega_{0I}$'s, is the starting point of our study of the MMM (\ref{generalresc}), a system where there is no obvious choice of a basis consisting of harmonic oscillators since all the $\omega_{0I}$'s are equal to zero. Thus one has to explore the quartic potential in more detail, which gives birth to a mass term (quadratic in $x_I$) in a properly chosen Fock space. In Section 3 we consider a toy model, the $U(n)$-invariant anharmonic oscillator (AO), and present Fock space-based techniques to get upper and lower bounds for the ground state energy in the planar limit, which agree with the exact answer \cite{planar,collective,singlet spectrum mondello onofri,shapiro, Yaffe, planar limit marchesini onofri}. In Section 4 we apply the method to the MMM. Finally, in Section 5 we show that the perturbative expansion suggested by the optimized Fock space decomposition serves as a very good approximation for the vacuum energy and for the spectral gap of the AO at least up to the third order, even in the strong-coupling regime $g \rightarrow \infty $, restoring the correct scaling of the spectrum with the coupling constant $\propto g^{\frac{1}{3}}$. We also observe that the MMM contains a small hidden parameter (an effective coupling constant) and we perform the corresponding expansion, which turns out to be consistent with our lower and upper bounds for the vacuum energy, justifying the validity of the planar limit for this model. In particular, we find that the effective coupling constant is proportional to $\frac{1}{d-1}$, hence the perturbative expansion has better convergence properties in higher dimensions. Moreover, the perturbation series for the first $SO(d)\times SU(n)$ invariant excited state indicates that the spectral gap is finite at large $n$.

Although the purely bosonic model (\ref{quantum membrane}) is interesting on its own, more attention has been recently paid to its supersymmetric extensions, in particular for $d=9$ dimensions (resp. for $d=9$ matrices), see e.g. \cite{supermembrane} and \cite{douglas thesis} for a more recent review of this topic. Remarkably, the spectrum of the supersymmetric version of (\ref{quantum membrane}) turns out to be continuous and equal to the interval $[0, \infty)$, \cite{unstable supermembrane}. However, there have been also evidence of existence of discrete eigenvalues embedded in the continuous spectrum  \cite{akm, coexisting wosiek}. In particular, zero energy normalizable states are of high importance for the supermembrane as well as in the context of the BFSS conjecture of M-Theory \cite{BFSS}. Despite a number of profound results concerning zero energy eigenfunctions, e.g. an explicit construction in the pure fermionic sector for the $n=2$ model \cite{wosiek}, large $x$ behaviour \cite{largex}, or existence results on compact regions \cite{compact}, the question posed in the full generality remains open. We believe that the approach discussed in this paper can shed new light on this problem since our results correspond to the purely bosonic sector of the supermembrane and our construction can be possibly extended to the fermionic sectors allowing to study the embedded part of the spectrum of the supermembrane. Moreover, as our method simplifies the Fock space representation of the Hamiltonian, it should also allow to optimize cut-off Fock space algorithms for studying various quantum mechanical systems related to (Super)Yang-Mills theories, introduced in \cite{cut-off}.

\section{Optimized Fock space}
\label{opt Fock space}

The canonically quantized Hamiltonian (\ref{general}) becomes formally a Schrödinger operator acting on $\otimes_{I=1}^{N} L^2(\mathbb{R},dx^I)$ with the classical coordinates replaced by operators in the usual way, i.e. $p_I=-i \partial_I$ and $x_I$ being the multiplication operator. Since we are interested in the large $N$ limit, it is convenient to embed $\otimes_{I=1}^{N} L^2(\mathbb{R},dx^I)$ in the standard bosonic Fock space $\mathcal{H}_{\omega}$, defined as the Hilbert space generated by states of the form
\begin{equation}
\psi_{\lbrace I_1,...,I_k \rbrace}:= a_{I_1}^{\dagger}(\omega_{I_1}) ...a_{I_k}^{\dagger}(\omega_{I_k})\Psi_0(\omega),~~k ~\text{finite},
\label{elmentary}
\end{equation}where 
\begin{align}
a_I(\omega_I)=\frac{1}{\sqrt{2}}\left(\frac{\partial_I}{\sqrt{\omega_I}}+ \sqrt{\omega_I} x_I \right), \nonumber \\
a_I^{\dagger}(\omega_I)=\frac{1}{\sqrt{2}}\left(-\frac{\partial_I}{\sqrt{\omega_I}} + \sqrt{\omega_I} x_J \right),
\end{align}
with
\begin{equation}
[a_I(\omega),a_J^{\dagger}(\omega)]= \delta_{I J} \mathbb{I}.
\label{alg}
\end{equation}
We denote the norm and scalar product in  $\mathcal{H}_{\omega}$ by $||.||$ and $\langle.,.\rangle $ respectively. The vacuum state $\Psi_0(\omega)$ has the form of an infinite product $\Psi_0(\omega):= \Pi_{I=1}^{\infty} \psi_{\omega_I}(x_I)  $ with $\psi_{\omega_I}(x_I):= \sqrt[4]{\frac{ \omega }{\pi}} e^{-\frac{1}{2} \omega_I x_I^2}, ~a_I(\omega_I)\Psi_0(\omega)=0~ \forall I $.          \\
Then
\begin{align}
p_I(\omega_I)=-\frac{i \sqrt{\omega_I}}{\sqrt{2}}(a_I- a_I^{\dagger}), \nonumber \\
x_I(\omega_I)=\frac{1}{\sqrt{2 \omega_I}}(a_I + a_I^{\dagger}).
\label{pandx}
\end{align}
Since $\otimes_{I=1}^{N} L^2(\mathbb{R},dx^I) $ is isomorphic to the subspace $\mathcal{H}_{\omega,N} \subset \mathcal{H}_{\omega}$ generated by the first $N$ creation operators $a_I^{\dagger}, I=1,...,N$, the action of $H_N$ can be naturally extended to the whole Fock space by taking the tensor product of $H_N$ with the identity acting on $\mathcal{H}_{\omega,N}^{\perp}$. Therefore the operator induced on $\mathcal{H }_{\omega}$ by $H_N$ has the form
\begin{equation}
H_N=\left(\sum_I p_I p_I +  \sum_{I}\omega_{0I}^2 x_I x_I+ N^{-\gamma} \sum_{I,J,K,L}c^{(N)}_{IJKL}x_I x_J x_K x_L \right ) \otimes \mathbb{I}_{\mathcal{H}_{\omega,N}^{\perp}}.
\label{fullquantfN}
\end{equation}
$H_N$ is well defined for finite N (if  $c^{(N)}_{IJKL}$ are finite), but in general not for $N=\infty$. In order to assure that the domain of $H_N$ contains more than the zero vector, one has to subtract the divergent ground state energy by adding a multiple of the identity operator $\beta_N \mathbb{I}$ to the Hamiltonian. We expect that for $\gamma$ large enough (and $ c^{(N)}_{IJKL}$ not too pathological) one should get at least a finite limit of $ ||(H_{N}+ \beta_N \mathbb{I})\psi||$ for a generic $\psi\in \mathcal{H}$ and some $\beta_N$. This is unfortunately not always the case since the kinetic energy contains terms proportional to $a_I a_I$ and $a_I^{\dagger} a_I^{\dagger}$, which are divergent at large $N$ because e.g. $||\sum_I a_I^{\dagger} a_I^{\dagger} \psi||\rightarrow \infty$ for every non-zero $\psi \in \mathcal{H}_{\omega}$. As a consequence, in order to simplify the large $N$ behaviour of the theory one should cancel such terms by finding corresponding counter-terms in the potential. We will see that this is possible for a relatively large class of models including all models with a large symmetry group, e.g. $SO(n)$ symmetric vector models or $U(n)/SU(n)$ symmetric matrix models, where the interaction has the form of a trace. The idea is that the sequence of the Fock space frequencies $\lbrace \omega_I \rbrace$ has to be adjusted to the Hamiltonian, i.e. the choice of the subspace $\mathcal{H}_{\omega} \subset \otimes_{I=1}^{\infty}L_2(\mathbb{R},dx^I)$ where the limit is taken, depends on the operator and in most cases the optimal choice is not the natural choice $\omega_I= \omega_{0I}$ (especially when all $\omega_{0I}=0$ like for the MMM). 

Let us see how it works in detail. $H_N$ rewritten in terms of $a_I$ and $a_I^{\dagger}$ becomes (for the sake of transparency, we leave out the $\mathbb{I}_{\mathcal{H}_{\omega,N}^{\perp}}$ part of (\ref{fullquantfN}))
\begin{align}
H_N \equiv T_N+V_N^{(2)}+V_N^{(4)} = \frac{1}{2}\sum_I \omega_I(2a_I^{\dagger}a_I-a_I^{\dagger}a_I^{\dagger}-a_I a_I +  \mathbb{I})+\frac{1}{2}\sum_I \frac{\omega_{0I}^2}{\omega_I}(2a_I^{\dagger}a_I+a_I^{\dagger}a_I^{\dagger}+a_I a_I +  \mathbb{I})\nonumber \\+\frac{ N^{-\gamma}}{4 }\sum_{IJKL}\frac{c_{IJKL}}{\sqrt{\omega_I\omega_J \omega_K \omega_L}}(a_I a_J a_K a_L +a_I^{\dagger} a_J a_K a_L+a_I a_J^{\dagger} a_K a_L+a_I a_J a_K^{\dagger} a_L+a_I a_J a_K a_L^{\dagger}\nonumber \\
+a_I^{\dagger} a_J^{\dagger} a_K a_L+a_I^{\dagger} a_J a_K^{\dagger} a_L+a_I^{\dagger} a_J a_K a_L^{\dagger}+a_I a_J^{\dagger} a_K^{\dagger} a_L+a_I a_J^{\dagger} a_K a_L^{\dagger}+a_I a_J a_K^{\dagger} a_L^{\dagger}\nonumber \\
a_I^{\dagger}a_J^{\dagger}a_K^{\dagger}a_L+a_I^{\dagger}a_J^{\dagger}a_K a_L^{\dagger}
+a_I^{\dagger}a_Ja_K^{\dagger}a_L^{\dagger}+a_I a_J^{\dagger}a_K^{\dagger}a_L^{\dagger}
\nonumber \\ +a_I^{\dagger}a_J^{\dagger}a_K^{\dagger}a_L^{\dagger}).
\end{align}
We rewrite the quartic potential using the commutation relations (\ref{alg})
\begin{align}
V_N^{(4)}= N^{-\gamma}\frac{1}{4 } \sum_{IJK}\frac{1}{\omega_K\sqrt{\omega_I\omega_J }}\left(\frac{1}{2} a_I^{\dagger} a_Jc_{(I J KK)}+
\frac{1}{4}(a_I a_J+a_I^{\dagger} a_J^{\dagger})c_{(I J KK)}\right)\\
 +\frac{ N^{-\gamma}}{4} \sum_{I,J}\frac{1}{\omega_I \omega_J}(c_{IIJJ}+c_{IJIJ}+c_{IJJI}) \mathbb{I} + N^{-\gamma}:V_N: \\
 \equiv N^{-\gamma} A^{(N)}_{IJ} a_I^{\dagger} a_J+
 N^{-\gamma}\frac{1}{2} A^{(N)}_{IJ} (a_I a_J + a_I^{\dagger} a_J^{\dagger})+\\
 +N^{-\gamma}f(N)\mathbb{I}+ N^{-\gamma}:V_N:,
\end{align}
where we have defined $A^{(N)}_{IJ}:=\sum_{K}\frac{c^{(N)}_{(I J KK)}}{8 \omega_K \sqrt{\omega_I\omega_J}}$ and $f(N):=\sum_{I,J}\frac{c^{(N)}_{IIJJ}+c^{(N)}_{IJIJ}+c^{(N)}_{IJJI}}{4 \omega_I \omega_J} $, $\lbrace (I_1,...,I_k) \rbrace :=\sum_{\pi \in S_k}\lbrace I_{\pi(1)},...,I_{\pi(k)} \rbrace$ denotes the symmetrization and $::$ is the normal ordering with respect to $\Psi_0({\omega})$. We get 
 
\begin{align}
H_N+\beta_N \mathbb{I}=\sum_{I,J} \lbrace ((\omega_I+\frac{\omega_{0I}^2}{\omega_I}) \delta_{IJ} + N^{-\gamma} A^{(N)}_{IJ})a_I^{\dagger}a_J +
\frac{1}{2}(N^{-\gamma} A^{(N)}_{IJ} -(\omega_I+\frac{\omega_{0I}^2}{\omega_I}) \delta_{IJ})(a_I a_J + a_I^{\dagger} a_J^{\dagger}) \rbrace\\+  \left(\frac{1}{2}\sum_I (\omega_I+\frac{\omega_{0I}^2}{\omega_I})+N^{-\gamma}f(N)+\beta_N \right)\mathbb{I}+ N^{-\gamma}:V_N: \\
 \equiv \sum_{I,J} \left( A^{(N+)}_{IJ} a_I^{\dagger}a_J +  \frac{1}{2} A^{(N-)}_{IJ}(a_I a_J + a_I^{\dagger} a_J^{\dagger})\right)+ N^{-\gamma}:V_N:,
\end{align}
where we have chosen $\beta_N:= -\frac{1}{2}\sum_I(\omega_I+\frac{\omega_{0I}^2}{\omega_I})-N^{-\gamma}f(N)$ and have introduced two more matrices $A^{(N+)}_{IJ}:=N^{-\gamma} A^{(N)}_{IJ}+(\omega_I+\frac{\omega_{0I}^2}{\omega_I}) \delta_{IJ}$ and $A^{(N-)}_{IJ}:=N^{-\gamma} A^{(N)}_{IJ} -(\omega_I-\frac{\omega_{0I}^2}{\omega_I}) \delta_{IJ}$.\\
Let us now assume that there exists a sequence $\lbrace \tilde{\omega}_I \rbrace$, called the optimized Fock space frequencies (and $\mathcal{H}_{\tilde{\omega}}$ called the optimized Fock space), s.t.  
\begin{equation}
 \lim_{N \rightarrow \infty} A^{(N-)}_{IJ}=0,~~\forall I,J,
 \label{diagonallimit}
\end{equation}
i.e. that the matrix $A^{(N)}_{IJ}$ is diagonal at large $N$ (this is in fact the case for the mentioned above $SO(n)$ symmetric vector models or $U(n)/SU(n)$ symmetric matrix models\footnote{One way to see it is to notice that for these models all double contractions of the tensor defining the quartic interaction produce Kronecker deltas, i.e. $c_{IJKK}^{(N)}\propto \delta_{IJ}$} ). The smallest $\gamma$ for which it is possible we call $\gamma_{crit}$.
 From eq. (\ref{diagonallimit}) we get that $  \lim_{N \rightarrow \infty}  N^{- \gamma}A^{(N)}_{IJ}=diag((\tilde{\omega}_1-\frac{\omega_{0I}^2}{\tilde{\omega}_1}),(\tilde{\omega}_2-\frac{\omega_{0I}^2}{\tilde{\omega}_2}),...)$ and thus
   $ \lim_{N \rightarrow \infty} A^{(N+)}_{IJ}=diag(2 \tilde{\omega}_1, 2\tilde{\omega}_2,...)$.
Moreover the sequence $\beta_N$ exhibits a very nice property given in the following lemma. 
 \begin{lemma}
 \label{lemma1}
({\bf Optimized Fock space decomposition}) Assume that $ A_{IJ}^{(N-)}\simeq O(\frac{1}{N})$. Then the optimized Fock space frequencies $\tilde{\omega}_I$ coincide with the optimized gaussian vacuum frequencies at large $n$ and $-\frac{\beta_N}{N}$ converges to the upper gaussian variational bound $e_{0,N}^{(0)}$ for the ground state energy of $\frac{H_N}{N}$ as $n \rightarrow \infty$. Moreover, the Hamiltonian (\ref{fullquantfN}) admits the following decomposition in $\mathcal{H}_{\tilde{\omega}}$
\begin{equation}
H_N=\left( 2\sum_{I=1}^N \tilde{\omega}_I a_I^{\dagger}a_I+\frac{1}{N^{- \gamma}}:V_{N}:+ N e_{0}^{(0)} \mathbb{I} \right) \otimes  \mathbb{I}_{\mathcal{H}_{\tilde{\omega},N}^{\perp}} + R_N,
\end{equation}
where $e_{0}^{(0)}= -\lim_{n \rightarrow \infty} \frac{\beta_N}{N} =\lim_{n \rightarrow \infty} e_{0,N}^{(0)}$ is given by the condition $\lim_{n \rightarrow \infty} A_{IJ}^{(N-)}=0$, and \\$\lim_{n \rightarrow \infty} ||R_N \psi||=0,~ \forall \psi \in \mathcal{H}_{\tilde{\omega}}$.
 \end{lemma}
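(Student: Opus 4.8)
\emph{Strategy.} The plan is to read the asserted decomposition directly off the explicit $a_I,a_I^{\dagger}$–form of $H_N+\beta_N\mathbb{I}$ displayed above, once two preliminary facts are in place: (i) that $-\beta_N$ is precisely the gaussian vacuum expectation $\langle\Psi_0(\omega),H_N\Psi_0(\omega)\rangle$, so that $-\beta_N/N$ is the gaussian variational bound; and (ii) that the stationarity equations of the map $\omega\mapsto\langle\Psi_0(\omega),H_N\Psi_0(\omega)\rangle$ coincide with the defining condition (\ref{diagonallimit}) of the optimized frequencies, so that the optimized Fock frequencies are exactly the gaussian minimisers. Granting these, putting $\omega_I=\tilde\omega_I$ turns $H_N+\beta_N\mathbb{I}$ into $2\sum_I\tilde\omega_Ia^{\dagger}_Ia_I+N^{-\gamma}:V_N:$ plus a remainder whose coefficients are the entries of $A^{(N-)}_{IJ}$; the hypothesis $A^{(N-)}_{IJ}\simeq O(1/N)$ is then used to force this remainder, together with the scalar mismatch $-\beta_N-Ne^{(0)}_0$, to zero strongly on $\mathcal{H}_{\tilde\omega}$.

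\emph{Steps (i) and (ii).} For (i) I would expand $\langle\Psi_0(\omega),H_N\Psi_0(\omega)\rangle$ using (\ref{pandx}) and Wick's theorem relative to $\Psi_0(\omega)$: the kinetic part leaves only the $\mathbb{I}$–term of $-\tfrac{\omega_I}{2}(a_I-a^{\dagger}_I)^2$, giving $\tfrac12\sum_I\omega_I$; the quadratic part gives $\tfrac12\sum_I\omega_{0I}^2/\omega_I$; and the quartic part, via $\langle x_Ix_J\rangle=\delta_{IJ}/(2\omega_I)$, gives $N^{-\gamma}\sum_{I,J}(c^{(N)}_{IIJJ}+c^{(N)}_{IJIJ}+c^{(N)}_{IJJI})/(4\omega_I\omega_J)=N^{-\gamma}f(N)$; summing reproduces $-\beta_N$, which is also visible as the coefficient of $\mathbb{I}$ in the rearranged Hamiltonian. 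Thus $e^{(0)}_{0,N}:=-\beta_N/N$ is a gaussian upper bound for $\inf\mathrm{spec}(H_N/N)$. For (ii) I would differentiate, using the combinatorial identity $c^{(N)}_{(IIKK)}=4(g_{IK}+g_{KI})$ with $g_{IJ}:=c^{(N)}_{IIJJ}+c^{(N)}_{IJIJ}+c^{(N)}_{IJJI}$, to obtain $\partial f(N)/\partial\omega_I=-A^{(N)}_{II}/(2\omega_I)$ and hence
\[
\frac{\partial}{\partial\omega_I}\langle\Psi_0(\omega),H_N\Psi_0(\omega)\rangle=-\frac{1}{2\omega_I}\,A^{(N-)}_{II}.
\]
So every stationary point satisfies $A^{(N-)}_{II}=0$; together with the diagonality of $A^{(N)}_{IJ}$ for the models at hand (the footnote after (\ref{diagonallimit})) this is exactly (\ref{diagonallimit}). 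Hence the optimized Fock frequencies are the minimisers of the gaussian energy and $e^{(0)}_0=-\lim_n\beta_N/N=\lim_n e^{(0)}_{0,N}$.

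\emph{Extracting and estimating $R_N$.} Setting $\omega_I=\tilde\omega_I$ and using $A^{(N+)}_{IJ}=A^{(N-)}_{IJ}+2\tilde\omega_I\delta_{IJ}$, split $\sum_{I,J}A^{(N+)}_{IJ}a^{\dagger}_Ia_J=2\sum_I\tilde\omega_Ia^{\dagger}_Ia_I+\sum_{I,J}A^{(N-)}_{IJ}a^{\dagger}_Ia_J$ and write $-\beta_N=Ne^{(0)}_0+N(e^{(0)}_{0,N}-e^{(0)}_0)$, so that
\[
R_N=\Bigl(\sum_{I,J}A^{(N-)}_{IJ}a^{\dagger}_Ia_J+\tfrac12\sum_{I,J}A^{(N-)}_{IJ}\bigl(a_Ia_J+a^{\dagger}_Ia^{\dagger}_J\bigr)+N\bigl(e^{(0)}_{0,N}-e^{(0)}_0\bigr)\mathbb{I}\Bigr)\otimes\mathbb{I}_{\mathcal{H}_{\tilde\omega,N}^{\perp}},
\]
which is the claimed identity. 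To see $\|R_N\psi\|\to0$ it suffices to take $\psi$ in the dense set of finite–particle vectors, supported on a finite mode set $S$. Then $\sum_{I,J}A^{(N-)}_{IJ}a_Ia_J\psi$ is a genuinely finite sum (both indices forced into $S$) of $O(1/N)$ terms; for $\sum_{I,J}A^{(N-)}_{IJ}a^{\dagger}_Ia_J\psi$ one restricts the annihilation index to $S$ and estimates the norm via $\langle a^{\dagger}_I\phi,a^{\dagger}_{I'}\phi\rangle=\delta_{II'}\|\phi\|^2+\langle a_{I'}\phi,a_I\phi\rangle$, the diagonal part giving $\sum_I|A^{(N-)}_{IJ}|^2=O(1/N)$ and the rest a finite sum over $S$ of $O(1/N^2)$ terms; similarly $\|\sum_{I,J}A^{(N-)}_{IJ}a^{\dagger}_Ia^{\dagger}_J\psi\|^2$ is controlled by $\sum_{I,J}|A^{(N-)}_{IJ}|^2$, which by diagonality of $A^{(N-)}$ and the $O(1/N)$ rate is $O(N\cdot N^{-2})=O(1/N)$. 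Finally $N(e^{(0)}_{0,N}-e^{(0)}_0)\to0$ because $\tfrac1N\sum_I\tilde\omega_I$ is $N$–independent — all modes in $\mathcal{K}_n$ being equivalent under the symmetry group, so $\tilde\omega_I$ depends only on the fixed $\mathcal{D}$–label — while $N\cdot N^{-\gamma-1}f(N)=N^{-\gamma}f(N)\to0$ for $\gamma\ge\gamma_{crit}$.

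\emph{Main obstacle.} The two genuinely delicate points are both in the last step. First, the anomalous pieces $\sum_{I,J}A^{(N-)}_{IJ}a^{\dagger}_Ia_J$ and $\sum_{I,J}A^{(N-)}_{IJ}a^{\dagger}_Ia^{\dagger}_J$ contain a creation sum over all $N$ modes even for a fixed $\psi$, so entrywise smallness $A^{(N-)}_{IJ}=o(1)$ is not enough: one needs the Hilbert–Schmidt smallness $\sum_{I,J}|A^{(N-)}_{IJ}|^2\to0$, which the $O(1/N)$ rate supplies precisely because $A^{(N-)}$ is (near–)diagonal for the symmetric models — without diagonality the bare $O(1/N)$ hypothesis would leave these terms $O(1)$. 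Second, the scalar discrepancy $N(e^{(0)}_{0,N}-e^{(0)}_0)$ naively need not vanish, and killing it forces one to show that $e^{(0)}_{0,N}$ reaches its limit at rate $o(1/N)$; this is where the homogeneity of $\{\tilde\omega_I\}$ over the large index set $\mathcal{K}_n$ and the subleading size of $N^{-\gamma}f(N)$ (equivalently $\gamma\ge\gamma_{crit}$) are indispensable.
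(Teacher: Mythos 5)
Your proposal is correct and follows essentially the same route as the paper's proof: identify $-\beta_N$ with the gaussian expectation $\langle\Psi_0(\omega),H_N\Psi_0(\omega)\rangle$, observe that its stationarity condition is equivalent to $A^{(N-)}_{II}=0$ (hence to (\ref{diagonallimit}) at large $n$), and absorb the leftover $A^{(N-)}$-terms into $R_N$. The only difference is one of detail — you spell out the Hilbert–Schmidt estimate $\sum_{I,J}|A^{(N-)}_{IJ}|^2\to 0$ and the vanishing of the scalar discrepancy $N(e^{(0)}_{0,N}-e^{(0)}_0)$, which the paper's proof only asserts via $R_N\simeq\frac{const.}{N}\sum_I(a_Ia_I+a_I^{\dagger}a_I^{\dagger})$.
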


  \begin{proof}
   By noting that $\langle \Psi_0(\omega), H_N \Psi_0(\omega)\rangle=-\beta_N$ and using the variational principle we get that the optimized gaussian frequencies satisfying 
 \begin{align}
 0=\frac{\partial \beta_N}{\partial \omega_I}=\frac{1}{2}(1-\frac{\omega_{0I}^2}{\omega_I^2})-\frac{N^{- \gamma}}{4 \omega_I^2}\sum_J\frac{1}{\omega_J}(c^{(N)}_{IIJJ}+c^{(N)}_{JJII}+c^{(N)}_{IJJI}+c^{(N)}_{JIIJ}+c^{(N)}_{IJIJ}+c^{(N)}_{JIJI}),
 \end{align}
 or equivalently 
 \begin{align}
 \omega_I^2=\omega_{0I}^2 +\frac{N^{-\gamma}}{8}\sum_J\frac{1}{\omega_J}c^{(N)}_{(IIJJ)},
 \label{renormalized frequencies}
 \end{align}
converge to the solutions of (\ref{diagonallimit}). Moreover, the rest term $R_N$ originates from the matrix $ A_{IJ}^{(N-)}$, which means that $R_N \simeq \frac{const.}{N}\sum_I (a_I a_I + a_I^{\dagger} a_I^{\dagger} )$ and therefore $ ||R_N \psi||_{\tilde{\omega} } \rightarrow 0 ~\forall \psi \in \mathcal{H}_{\tilde{\omega}}$.
\end{proof} 

Note that this result does not imply that the system described by the Hamiltonian (\ref{general}) becomes necessarily a system of decoupled harmonic oscillators despite that according to Lemma \ref{lemma1}
\begin{equation}
\lim_{N \rightarrow \infty} \langle \psi, (H_N+\beta_N \mathbb{I}) \phi \rangle=\langle \psi,2 \sum_{I} \tilde{\omega}_I a_I^{\dagger}a_I \phi \rangle,
\end{equation}
$ \forall  \phi, \psi \in \mathcal{H}_{\tilde{\omega}}$ containing a finite number of elementary excitations (\ref{elmentary}). This means that, in order to avoid such a trivialisation, one has to employ states with infinitely many oscillatory modes and treat the optimized Fock space as the first step towards the correct quantum description of the model. This is the case for the two considered here matrix models, as we will see in the next sections, where the relevant space of interest is the space of $SU(n)$ invariants.

\section{$U(n)$-invariant anharmonic oscillator}

Let us consider the $U(n)$ symmetric matrix model 
\begin{align}
2 H_N=Tr(P^2)+Tr( M^2+ \frac{2 g}{n} M^4),
\label{1matrix}
\end{align}
where $M$ is a hermitian $n \times n$ matrix and $P$ its conjugate momentum.
Since the exact value of the ground state energy of the model is known and even the $U(n)$ symmetric sector has been solved in the large $n$ limit one could expect that the Fock space approach should allow to rederive these results. Unfortunately, due to various technical difficulties, such as the fact that the Fock space representation of the Hamiltonian (\ref{1matrix}) does not annihilate the Fock space vacuum and does not preserve the single trace sector, there has been no exact solutions based on a Fock space formalism for this model so far. Nevertheless, one can still get some information about the spectrum using Fock space methods, which is of high importance for models where other methods fail. The Hamiltonian (\ref{1matrix}) is an excellent laboratory for testing our large $N$ techniques before approaching the Membrane Matrix Models (\ref{matrix model}), since it exhibits all the properties which make the quantization of a system with a large number of degrees of freedom cumbersome.\\  Let us start with the optimized Fock space decomposition for (\ref{1matrix}). 
  By expanding $M$ and $P$ in the basis $\lbrace T_a \rbrace_{a=1}^{n^2}$ consisting of $N:=n^2$ generators of $U(n)$ with normalisation $Tr(T_a T_b)=\delta_{ab}$, satisfying the completeness relation
\begin{equation}
 (T_a)_{ij}(T_a)_{kl}=\delta_{jk} \delta_{il},
 \label{completeness relation1}
\end{equation} 

we arrive at   $M=T_a x_a, P=T_a p_a$ ($N=n^2-1,~\mathcal{K}_N = \lbrace 1,...,n^2-1 \rbrace, ~ d=1,~ \omega_{0I}=1$) and
\begin{align}
2 H_N= p_a p_a + x_a x_a + c_{abcd} x_a x_b x_c x_d,
\label{q1matrix}
\end{align}
with $c_{abcd}=\frac{2 g}{n} Tr(T_a T_b T_c T_d)$. The action of $H_N$ in the Fock space $\mathcal{H}_{\omega}$, given in terms of the creation and annihilation operators, becomes (assuming that $\omega_a= \omega~ \forall a$)
\begin{align}
 2H_N=  \sum_{a,b} \left( A^{(N+)}_{ab} a_a^{\dagger}a_b +  \frac{1}{2} A^{(N-)}_{ab}(a_a a_b + a_a^{\dagger} a_b^{\dagger})\right)+ \frac{2 g}{n} :Tr{M^4}:- \beta_N \mathbb{I}.
\end{align}
As mentioned previously, the matrix $A_{ab}^{(N)}$ is diagonal\footnote{$A_{IJ}^{(N)}\simeq O(1)$ and thus $\gamma_{crit}=0$. Note also that we perform the calculation for $2 H_N$ instead of $H_N$ in order to match it with the conventions from Section 2 and in the end we divide the results by 2 to compare them with the work of Brezin et al.} and 
\begin{align}
A^{(N+)}_{ab}&=(\omega +\frac{1}{\omega}+\frac{4g}{\omega^2})\delta_{ab}+\frac{2g}{n} \delta_{a0} \delta_{b0},\\
A^{(N-)}_{ab}&=(-\omega +\frac{1}{\omega}+\frac{4g}{\omega^2})\delta_{ab}+\frac{2g}{n} \delta_{a0} \delta_{b0},\\
\beta_N&=-\frac{n^2}{2}(\omega+\frac{1}{ \omega}+\frac{2g}{ \omega^2}+O(\frac{1}{n^3})).
\label{betatilden}
\end{align}
The condition $\lim_{n \rightarrow \infty} A^{(N-)}_{ab}=0$ (or equivalently $\lim_{n \rightarrow \infty} \frac{\partial \beta_N}{\partial \omega}=0$) implies the equation
\begin{equation}
\omega^3=\omega+4 g,
\label{opt freq}
\end{equation}
whose real solution $\tilde{\omega}$ is obviously \emph{different} than the natural choice $\omega=1$ suggested by the original quadratic term. As we will see in Section \ref{perturbations diagrams}, $\tilde{\omega}$ provides a crude approximation of the spectral gap for this model. According to Lemma \ref{lemma1}, the Hamiltonian in the optimized Fock space $\mathcal{H}_{\tilde{\omega}}$ takes the following form
\begin{align}
 2H_N=  2\sum_{a} \tilde{\omega} a_a^{\dagger}a_a + \frac{2 g}{n} :Tr{M^4}:+ n^2 e_{0}^{(0)} \mathbb{I}+R_N,
 \label{quantum anharmonic}
\end{align}
where $||R_N \psi|| \rightarrow 0~~ \forall \psi \in \mathcal{H}_{\tilde{\omega}}$. 
Inserting $\tilde{\omega}$ to (\ref{betatilden}) gives the gaussian variational upper bound for the ground state energy
\begin{equation}
\frac{e_0^{(0)}(g)}{2}=\frac{\tilde{\omega}}{4}+\frac{1}{4 \tilde{\omega}}+\frac{g}{2 \tilde{\omega}^2},
\label{gaussian bound AO}
\end{equation}
which is in an excellent agreement with the result of \cite{planar} (even for a large coupling $g$), where the authors obtained the exact value. Asymptotically, for large $g$,  they have $e_0(g) \simeq 0.58993 g^{\frac{1}{3}}$. Our variational bound $\frac{e_0^{(0)}(g)}{2}\simeq 0.59527 g^{\frac{1}{3}}$ is at most $\approx 9 \permil$ wrong (see Table \ref{table1}).

\subsection{Spectral bounds}
\label{spectral bounds AO}
In order to produce a lower bound for the spectrum, one has to take into account the interaction term $\frac{1}{n}:V_N:$. This involves quite technical, but instructive estimates of matrix elements of the Hamiltonian between $U(n)$ invariant wave functions, which we present in this subsection.

Proceeding along the lines of \cite{Fock Space methods} we introduce a basis of the $U(n)$ invariant subspace $\mathcal{I}_{\tilde{\omega}}^{(n)} \subset \mathcal{H}_{\tilde{\omega}}$ spanned by $U(n)$-invariant linear combinations of the first $N=n^2$ creation operators, called the partitions basis 
\begin{equation}
\psi_{\lambda}:= \mathcal{N}_{\lambda} (a^{\dagger})^{\lambda} \Psi_0(\tilde{\omega}) \equiv \mathcal{N}_{\lambda} Tr(a^{\dagger \lambda_1}) Tr(a^{\dagger \lambda_2})...Tr(a^{\dagger \lambda_m}) \Psi_0(\tilde{\omega}), \label{partitions basis}
\end{equation}
where $Tr(a^{\dagger \lambda_i}):=Tr(T_{b_1}...T_{b_{\lambda_i}})a^{\dagger}_{b_1}...a^{\dagger}_{b_{\lambda_i}}$ and $\lambda=(1^{\lambda_1},2^{\lambda_2},...,m^{\lambda_m})$ is a partition of a certain natural number $k=|\lambda|:= \sum_i i \lambda_i$. The partitions basis becomes orthonormal at $N= \infty$ for the properly chosen normalisation factors $\mathcal{N}_{\lambda} \propto n^{-\frac{|\lambda|}{2}}$ (generically), see \cite{33}. 

As shown in \cite{Fock Space methods}, the matrix elements of the Hamiltonian (\ref{quantum anharmonic}) in the partitions basis contain three groups of divergent terms 
\begin{enumerate}
\item the vacuum expectation value $(\Psi_0, H_N \Psi_0)\propto n^2$ corresponding to our $\beta_N \mathbb{I}$,
\item $(\psi_{\lambda}, H_N \psi_{\delta})\propto n$, where $\lambda_2=\delta_2 \pm 1$, coming from the $Tr(a^{\dagger 2}+a^2)$ part of $H_N$,
\item $(\psi_{\lambda}, H_N \psi_{\delta})\propto n$, where $\lambda_4=\delta_4 \pm 1$, coming from the $\frac{1}{n} Tr(a^{\dagger 4}+a^4)$ part of $H_N$.
\end{enumerate}
Renormalization of $H_N$ is based on a proper choice of basis (in particular the vacuum) such that the second and third group of divergent matrix elements would be "absorbed" into the first one as a constant shift of the whole spectrum. The divergent vacuum energy can be then easily subtracted.
According to Lemma \ref{lemma1}, the proper choice of the Fock space frequencies allows to absorb Group 2 into the ground state energy by eliminating the $Tr(a^{\dagger 2}+a^2)$ part from the description. Then, for the suitable $\beta_N$, the only divergent matrix elements of $H_N+\beta_N \mathbb{I}$ belong to Group 3 but they are much more difficult to handle since the resulting vacuum is no longer a standard Fock vacuum (there are no counterterms which would cancel $Tr(a^{\dagger 4}+a^4)$ after an appropriate choice of the Fock space frequencies as it happened with $Tr(a^{\dagger 2}+a^2)$ ). In order to approach this problem let us point out that in the planar limit one can interpret $Tr(a^{\dagger 4})$ and $ Tr(a^4)$ as composite creation-annihilation operators 
\begin{align}
A&:=Tr(T_a T_b T_c T_d) a_a a_b a_c a_d \\
\left[A, A^{\dagger} \right]&= 4 n^4 \mathbb{I}+O(n^2).
\end{align}
Intuition suggests that one should treat $A$ and $A^{\dagger}$ in a special way in the Fock space description. This can be realized as follows. Let us introduce a new basis, completely equivalent to the partitions basis, consisting of
\begin{equation}
\psi_{\lambda, k}:= \mathcal{N}_{k,\lambda} (a^{\dagger})^{\lambda} (A^{\dagger})^k \psi_0,~ k=0,1,2...,~ \lambda_4=0,
\label{Abasis}
\end{equation}
where $ \mathcal{N}_{k,\lambda} \propto  \mathcal{N}_{\lambda}n^{-2k}$. It turns out that the states $\psi_{\lambda,k}$ exhibit very useful properties, which we explore below.

\begin{lemma} 
\label{lemma2}
The action of the operator
\begin{equation}
H_N= \alpha a_a^{\dagger} a_a+ \frac{\beta}{n}(A+A^{\dagger} + \gamma Tr(a^{\dagger}a^{\dagger} a a)) 
\label{operator1}
\end{equation}
on the states $\psi_{ \lambda k}$, asymptotically \footnote{ All operator inequalities are meant here  in the usual sense, i.e. $H_1 \geq H_2$ iff $\langle \psi,H_1 \psi \rangle \geq \langle \psi,H_2 \psi \rangle~~ \forall \psi \in Dom(H_1) \subset Dom(H_2)$. Also, terms of order $O(\frac{1}{n})$ are meant in the sense of norm in $\mathcal{H}_{\tilde{\omega}}$ and asymptotically equal terms differ by terms of order at most $O(\frac{1}{n})$. We call an operator $T_N$ to be of order $O(n^k)$ iff $||T_N \psi|| \leq const.(\psi) n^k, \forall \psi \in \mathcal{H}_{\omega} $. The key observation which allows to compute the leading terms at large $n$ (i.e. of order $O(1)$, $O(n)$ and $O(n^2)$ ) is the fact that they originate from Wick contractions of adjacent operators sitting in one $U(n)/SU(n)$ trace, resp. corresponding to planar contractions in the diagramatic representation (see Section \ref{perturbations diagrams}) and we refer to the large $n$ limit taken by neglecting all non-planar contributions, resp. all subleading terms (i.e. of order $O(\frac{1}{n})$ and lower) as the planar limit. In Section 5 we give a perturbative justification of this limit up to the third order}, at large $n$, becomes
\begin{equation}
H_N \psi_{\lambda,k}= \left( \frac{\Omega}{n^4} B^{\dagger} B+ (\tilde{e}_0 n^2 +G(\lambda))\mathbb{I}  \right) \psi_{\lambda,k}
\end{equation}
with $G(\lambda)= \alpha|\lambda| + \beta \gamma \sum_{i=2} i \lambda_i $, $\Omega= \alpha + \beta \gamma$ and $\tilde{e}_0= -\frac{\beta^2}{\alpha + \beta \gamma}$ and $ B=A+\frac{\beta n^3}{\alpha+ \beta \gamma} \mathbb{I}$
\end{lemma}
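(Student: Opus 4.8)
The plan is to turn the statement into a purely \emph{diagonal} identity on the $\psi_{\lambda,k}$ by an exact algebraic rearrangement of the right‑hand side, and then to evaluate that diagonal action block by block using the planar combinatorics of the partition/$A$‑basis. Since $B=A+\frac{\beta n^{3}}{\Omega}\mathbb{I}$ with $\Omega=\alpha+\beta\gamma$ is merely a c‑number shift of $A$, one has the exact operator identity (no commutator is used)
\begin{equation}
\frac{\Omega}{n^{4}}B^{\dagger}B=\frac{\Omega}{n^{4}}A^{\dagger}A+\frac{\beta}{n}\bigl(A+A^{\dagger}\bigr)+\frac{\beta^{2}n^{2}}{\Omega}\,\mathbb{I},
\end{equation}
and since $\tilde e_{0}n^{2}=-\beta^{2}n^{2}/\Omega$ the constant cancels, so the claim is equivalent to $H_N\psi_{\lambda,k}=\bigl(\tfrac{\Omega}{n^{4}}A^{\dagger}A+\tfrac{\beta}{n}(A+A^{\dagger})+G(\lambda)\bigr)\psi_{\lambda,k}$. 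As $\tfrac{\beta}{n}(A+A^{\dagger})$ occurs verbatim in $H_N$, it cancels from both sides, and with $W:=Tr(a^{\dagger}a^{\dagger}aa)$ it remains only to establish, up to $O(1/n)$ in the norm of $\mathcal{H}_{\tilde\omega}$,
\begin{equation}
\Bigl(\alpha\,a_{a}^{\dagger}a_{a}+\frac{\beta\gamma}{n}\,W\Bigr)\psi_{\lambda,k}=\Bigl(\frac{\Omega}{n^{4}}A^{\dagger}A+G(\lambda)\Bigr)\psi_{\lambda,k}.
\end{equation}

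Of the two operators on the left, $a_{a}^{\dagger}a_{a}$ is the total number operator and acts \emph{exactly} as multiplication by $|\lambda|+4k$, since $(a^{\dagger})^{\lambda}$ carries $|\lambda|=\sum_i i\lambda_i$ creators and each of the $k$ factors $A^{\dagger}$ carries four; on the right, the choice $\mathcal{N}_{k,\lambda}\propto\mathcal{N}_{\lambda}n^{-2k}$ makes the $\psi_{\lambda,k}$, to leading order in $n$, the harmonic‑oscillator tower $\tfrac{1}{\sqrt{k!}}(A^{\dagger}/2n^{2})^{k}\psi_{\lambda,0}$ of the composite mode. Indeed $[A,A^{\dagger}]=4n^{4}\mathbb{I}+O(n^{2})$ is central to leading order, and — here the hypothesis $\lambda_4=0$ is used — $\langle\psi_{\lambda,0},A^{\dagger}A\psi_{\lambda,0}\rangle=O(n^{2})$ (a planar norm computation, so $\|A\psi_{\lambda,0}\|=O(n)$, negligible against the $O(n^{2})$ scale at which $A,A^{\dagger}$ act on the tower), whence commuting $A$ through $(A^{\dagger})^{k}$ gives $A^{\dagger}A\psi_{\lambda,k}=4n^{4}k\,\psi_{\lambda,k}+o(n^{4})$, i.e. $\tfrac{\Omega}{n^{4}}A^{\dagger}A\,\psi_{\lambda,k}=4\Omega k\,\psi_{\lambda,k}+O(1/n)$.

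The remaining, and central, point is $\tfrac{1}{n}W\psi_{\lambda,k}=\bigl(4k+\sum_{i\ge 2}i\lambda_i\bigr)\psi_{\lambda,k}+O(1/n)$. In the planar limit the only surviving Wick contractions of $W=Tr(a^{\dagger}a^{\dagger}aa)$ on $\psi_{\lambda,k}$ are the ``local'' ones: the two annihilators contract with two cyclically adjacent creators lying in a single trace — one factor of $(a^{\dagger})^{\lambda}$, or one copy of $A^{\dagger}$ — and the two creators of $W$ are reinserted in the same slot, each such contraction producing one closed index loop (a factor $n$) and reproducing $\psi_{\lambda,k}$. A trace of length $i$ has $i$ cyclically adjacent pairs (length‑one pieces and the $U(1)$ mode have none, which is why the sum starts at $i=2$), and each $A^{\dagger}$, being a length‑four object, has $4$, giving $4k$ in all; every other contraction (non‑adjacent, straddling two traces, or reinserting elsewhere, which would change $\lambda$ or $k$) involves fewer loops and is $O(1/n)$‑suppressed, the relevant bookkeeping being that of \cite{Fock Space methods,33}. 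Combining the three blocks, the diagonal eigenvalue on $\psi_{\lambda,k}$ is
\begin{equation}
\alpha(|\lambda|+4k)+\beta\gamma\Bigl(4k+\sum_{i\ge 2}i\lambda_i\Bigr)=\alpha|\lambda|+\beta\gamma\sum_{i\ge 2}i\lambda_i+4(\alpha+\beta\gamma)k=G(\lambda)+4\Omega k,
\end{equation}
which matches $\tfrac{\Omega}{n^{4}}A^{\dagger}A+G(\lambda)$ by the previous paragraph, proving the reduced identity and hence the lemma.

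The hard step is the analysis of $W=Tr(a^{\dagger}a^{\dagger}aa)$: pinning down exactly which contractions dominate at large $n$ on the $\psi_{\lambda,k}$ and showing that the off‑diagonal, trace‑reshuffling remainder is genuinely $O(1/n)$ in $\mathcal{H}_{\tilde\omega}$. The supporting fact used above — that $A$ is of size only $O(n)$ on $\psi_{\lambda,0}$, which is precisely what the assumption $\lambda_4=0$ guarantees — rests on the same planar combinatorics together with the already‑recorded $[A,A^{\dagger}]=4n^{4}\mathbb{I}+O(n^{2})$.
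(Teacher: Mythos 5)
Your proposal is correct and follows essentially the same route as the paper: the substance in both is (i) the exact number‑operator count $|\lambda|+4k$, (ii) the planar Wick bookkeeping showing $\tfrac{1}{n^4}A^{\dagger}A$ acts as $4k$ and $\tfrac{1}{n}Tr(a^{\dagger}a^{\dagger}aa)$ as $4k+\sum_{i\ge2}i\lambda_i$ on $\psi_{\lambda,k}$ (with $\lambda_4=0$ controlling the quadruple‑contraction term), and (iii) the completion of the square via $B=A+\tfrac{\beta n^3}{\Omega}\mathbb{I}$. Your version merely reorganizes the algebra (expanding $B^{\dagger}B$ first) and is, if anything, slightly more explicit about where $\lambda_4=0$ enters and about the factor $\sum_{i\ge2}i\lambda_i$, which matches the lemma's statement of $G(\lambda)$.
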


\begin{proof}
First we prove that 
\begin{equation}
a^{\dagger}_a a_a \psi_{\lambda k} \simeq \left(  \frac{1}{n^4} A^{\dagger} A + |\lambda| \right)\psi_{\lambda,k}
\label{adaggera_AdaggerA exchange}
\end{equation}
Indeed, $a^{\dagger}_a a_a \psi_{\lambda k}= (4 k+|\lambda|)\psi_{\lambda k}$ and 
\begin{align}
 \frac{1}{n^4} A^{\dagger} A \psi_{\lambda k} =  \frac{\mathcal{N}_{k,\lambda}}{n^4} A^{\dagger} [A, (a^{\dagger})^{\lambda} (A^{\dagger})^k] A \psi_{0} 
 = \frac{\mathcal{N}_{k,\lambda}}{n^4}  \left(A^{\dagger} [A,(a^{\dagger})^{\lambda}](A^{\dagger})^k  + A^{\dagger}(a^{\dagger})^{\lambda}[A,(A^{\dagger})^k]  \right) \psi_0\\
 = \frac{\mathcal{N}_{k,\lambda}}{n^4}  A^{\dagger} [A,(a^{\dagger})^{\lambda}](A^{\dagger})^k  \psi_0 + 4k \psi_{\lambda,k}+O(\frac{1}{n^2})\label{first term}
\end{align}
One has to show that the first term in (\ref{first term}) converges to $0$ in norm at large  $n$ $\forall \lambda$. According to Wick's theorem we have 4 cases 
\begin{itemize}
\item \emph{single contractions}:   we get $n^{-\frac{1}{2}}$ from the fact that the partition $\lambda$ has been shortened by 1. Then we are left with three annihilation operators acting on $(A^{\dagger})^k  \psi_0$, which prolongs $\lambda $ by 1 and thus gives a factor of $n^{\frac{1}{2}}$ as well as $n^2$ (at most, when we perform a planar contraction, resp. when adjacent indices of $U(N)$ are contracted) from the contraction with $A^{\dagger}$. The total factor is then $n^{-2} \rightarrow 0$;
\item \emph{double contractions}:  we get $n^1$ (at most) from the double contraction, $n^{-1}$ from the fact that the partition $\lambda$ has been shortened by 2. Then we are left with two annihilation operators acting on $(A^{\dagger})^k  \psi_0$, which prolongs $\lambda $ by 2 and thus gives a factor of $n^1$ as well as one more  $n^1$ (at most) from the contraction with $A^{\dagger}$. The total factor is then $n^{-2} \rightarrow 0$;
\item \emph{triple contractions}:  we get $n^2$ (at most) from the triple contraction, $n^{-\frac{3}{2}}$ from the fact that the partition $\lambda$ has been shortened by 3. Then we are left with one annihilation operator acting on $(A^{\dagger})^k  \psi_0$, which prolongs $\lambda $ by 3 (and thus gives a factor of $n^{\frac{3}{2}}$). The total factor is $n^{-2} \rightarrow 0$;
\item \emph{quadrupole contractions}:  we get $n^3$ (at most) from the quadrupole contraction, $n^{-2}$ from the fact that the partition $\lambda$ has been shortened by 4 and $n^2$  coming from the fact that $k$ has increased by 1. The overall factor becomes then $n^{-4} n^3 n^{-2} n^2= n^{-1} \rightarrow 0 $,
\end{itemize}
which proves (\ref{adaggera_AdaggerA exchange}). Then we observe (using a similar justification as above) that at large $n$ we get
\begin{align}
\frac{1}{n}(a^{\dagger}a^{\dagger} a a)\psi_{\lambda,k} \simeq \left(\sum_{k=2}^{\infty} \lambda_k + 4 k \right)\psi_{\lambda,k}\simeq \left(\sum_{k=2}^{\infty} \lambda_k + \frac{1}{n^4}  A^{\dagger} A ,\right)\psi_{\lambda,k}.
\end{align}
After introducing a new operator $B=A+\frac{\beta n^3}{\alpha+ \beta \gamma} \mathbb{I}$, we obtain
\begin{equation}
H_N \psi_{\lambda,k}= \left( \frac{\Omega}{n^4} A^{\dagger} A+ \frac{\beta}{n}(A+A^{\dagger}) +G(\lambda)\mathbb{I}  \right) \psi_{\lambda,k}=\left( \frac{\Omega}{n^4} B^{\dagger} B+ (\tilde{e}_0 n^2 +G(\lambda))\mathbb{I}  \right) \psi_{\lambda,k}.
\end{equation}
\end{proof}
Now we come back to the full Hamiltonian, which can be rewritten as 
\begin{align}
H_N=\tilde{\omega}  (1- \epsilon) a_a^{\dagger} a_a +\frac{g}{4 n \tilde{\omega}^2} (A+A^{\dagger}+4 Tr(a^{\dagger} a^{\dagger} a a))\label{Hneps1}\\
+ \epsilon \tilde{ \omega}  a_a^{\dagger} a_a+ \frac{g}{n\tilde{\omega}^2} (Tr(a^{\dagger} a^{\dagger} a^{\dagger} a)+Tr(a^{\dagger} a a a))\label{Hneps2}\\
+\frac{g}{ 2 n \tilde{\omega}^2}:Tr(a^{\dagger}a a^{\dagger}a):+\frac{e_0^{(0)}}{2}n^2 \mathbb{I},\label{Hneps3}
\end{align}
for some $0< \epsilon <1$. The first part of the Hamiltonian i.e. (\ref{Hneps1}) is exactly of the form considered in Lemma \ref{lemma2}, while the first term in (\ref{Hneps3}) is of order $O(\frac{1}{n})$, hence it does not contribute to the planar limit, see Section \ref{perturbations diagrams}. Now we will determine for which values of $\epsilon$ the middle part of $H_N$ namely (\ref{Hneps2}) is asymptotically non-negative, which will allow to bound the whole Hamiltonian from below by an operator of the form (\ref{Hneps1}).

\begin{lemma}
Let $\omega, \delta >0$. Then the operator
\begin{equation}
H_{n}= \omega a_a^{\dagger} a_a+ \frac{\delta}{n}\left( Tr( a^{\dagger} a^{\dagger} a^{\dagger} a)+Tr(a^{\dagger} aaa) \right)
\end{equation}
is non-negative definite at large $n$ if $\frac{\omega}{\delta} > 2$.\\

\end{lemma}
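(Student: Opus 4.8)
The plan is to dominate the cubic--linear ``hopping'' term by the number operator in a degree--graded way; the threshold $\omega/\delta=2$ will then drop out automatically from the lowest non-trivial sector. Write $N_{\mathrm{op}}:=a_a^{\dagger}a_a$ and $V:=Tr(a^{\dagger}a^{\dagger}a^{\dagger}a)$, so that $H_n=\omega N_{\mathrm{op}}+\tfrac{\delta}{n}(V+V^{\dagger})$ and $[N_{\mathrm{op}},V]=2V$. Given an arbitrary $\psi\in\mathcal H_{\tilde\omega}$ with finitely many excitations, decompose $\psi=\sum_{k\ge 0}\psi_k$ according to the eigenvalue $k$ of $N_{\mathrm{op}}$. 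Since $V$ shifts this eigenvalue by $+2$, one has $\langle\psi,V^{\dagger}\psi\rangle=\overline{\langle\psi,V\psi\rangle}$ and
\[
|\langle\psi,(V+V^{\dagger})\psi\rangle|=\Big|2\,\mathrm{Re}\sum_{k\ge 0}\langle\psi_{k+2},V\psi_k\rangle\Big|\le 2\sum_{k\ge 0}\|\psi_{k+2}\|\,\|V\psi_k\| .
\]

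The analytic core is the norm bound $\|V\psi_k\|\le n\sqrt{k}\,\|\psi_k\|\,(1+o(1))$ for a component $\psi_k$ of degree $k$. I would obtain it by evaluating $\langle\psi_k,V^{\dagger}V\psi_k\rangle$ after normal--ordering $V^{\dagger}V=Tr(a^{\dagger}aaa)\,Tr(a^{\dagger}a^{\dagger}a^{\dagger}a)$: the only admissible Wick contractions are those between the three annihilation operators of $V^{\dagger}$ and the three creation operators of $V$, and, running the same large--$n$ bookkeeping used in the proof of Lemma \ref{lemma2} (planar contractions of adjacent trace indices are the ones that survive), the unique ``non--crossing'' contraction leaves precisely the operator $a_a^{\dagger}a_a$ multiplied by two unconstrained index sums, hence contributes $n^{2}a_a^{\dagger}a_a$, whereas every crossing or merely partial contraction contributes $O(n)$ or less. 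This gives $\langle\psi_k,V^{\dagger}V\psi_k\rangle=n^{2}k\|\psi_k\|^{2}+O_{\psi}(n)$, and hence the claimed bound, with the $o(1)$ uniform over the finitely many non--zero components of a fixed $\psi$.

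Combining the two displays and using $\sqrt k\le\sqrt{k+2}$ together with $\|\psi_{k+2}\|\,\|\psi_k\|\le\tfrac12(\|\psi_{k+2}\|^{2}+\|\psi_k\|^{2})$ and a shift of the summation index, one gets $\sum_{k}\sqrt k\,\|\psi_{k+2}\|\,\|\psi_k\|\le\sum_{k}\sqrt k\,\|\psi_k\|^{2}$, and therefore
\[
\langle\psi,H_n\psi\rangle\ \ge\ \sum_{k\ge 1}\sqrt k\,\big(\omega\sqrt k-2\delta(1+o(1))\big)\|\psi_k\|^{2}.
\]
Every bracket is non--negative once $n$ is large, because $\omega\sqrt k\ge\omega>2\delta$ for all $k\ge1$; the binding sector is $k=1$, which is exactly why the constant is $2$ and why the hypothesis $\omega/\delta>2$ is strict (the strictness absorbs the $o(1)$).

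The main obstacle is squarely the middle step: showing that $V^{\dagger}V$ equals $n^{2}a_a^{\dagger}a_a$ at leading order with coefficient exactly one (and not, say, two), and that all remaining contraction patterns are genuinely subleading. The remaining two steps are soft. One should also keep in mind that, as in the footnote to Lemma \ref{lemma2}, ``non--negative at large $n$'' is meant in the sense $\liminf_{n\to\infty}\langle\psi,H_n\psi\rangle\ge 0$ for each fixed $\psi$, since the subleading corrections to $V^{\dagger}V$ grow with the degree and are not uniform over all of $\mathcal H_{\tilde\omega}$.
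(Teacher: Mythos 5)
Your overall strategy (grade by the number operator, Cauchy--Schwarz between the $k$ and $k+2$ sectors) is reasonable, but the analytic core --- the bound $\|V\psi_k\|\le n\sqrt{k}\,\|\psi_k\|(1+o(1))$, equivalently $V^{\dagger}V= n^{2}a_a^{\dagger}a_a+O(n)$ on invariant states --- is false, and it fails already in the lowest sector. Take $\psi_1=n^{-1/2}Tr(a^{\dagger})\Psi_0$ (normalized, degree $1$). A direct computation using the completeness relation gives $V\psi_1=n^{-1/2}Tr(a^{\dagger 3})\Psi_0$, and since $\|Tr(a^{\dagger k})\Psi_0\|^{2}=k\,n^{k}(1+O(n^{-2}))$, one finds $\|V\psi_1\|^{2}=3n^{2}\|\psi_1\|^{2}$, not $n^{2}\|\psi_1\|^{2}$. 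More generally $V\,Tr(a^{\dagger k})\Psi_0=k\,Tr(a^{\dagger(k+2)})\Psi_0$, so for single-trace states $\|V\psi_k\|^{2}=k(k+2)\,n^{2}\|\psi_k\|^{2}$ --- the correct growth is $\sim n^{2}k^{2}$, not $n^{2}k$. The bookkeeping error is that the ``crossing'' full contractions of the three annihilators of $V^{\dagger}$ with the three creators of $V$ do not all die: the cyclic rotations of the non-crossing pairing leave multi-trace remnants such as $n\,Tr(a^{\dagger})Tr(a)$, whose matrix elements on $U(n)$-\emph{invariant} states themselves grow like $n$ (that is exactly where the cyclic factor $k$ in $\|Tr(a^{\dagger k})\Psi_0\|^{2}=k n^{k}$ comes from), so they contribute at the same order $n^{2}$ as your ``unique'' planar term. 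Since the lemma is used precisely on the invariant sector, this is a genuine gap, located at the step you yourself flag as the main obstacle.

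Two remarks. First, your argument is salvageable: with the corrected estimate $\|V\psi_k\|\lesssim n\,k\,\|\psi_k\|$ the same Cauchy--Schwarz chain gives $\tfrac{\delta}{n}|\langle\psi,(V+V^{\dagger})\psi\rangle|\le \delta\sum_k k\left(\|\psi_{k+2}\|^{2}+\|\psi_k\|^{2}\right)\le 2\delta\sum_k k\|\psi_k\|^{2}$ after the index shift, and the threshold $\omega/\delta>2$ reappears --- but then you must prove the $nk$ bound uniformly over multi-trace degree-$k$ invariants, which is the hard part you have not supplied. Second, the paper avoids all of this: it writes down the manifestly non-negative operator $\bigl(\alpha(\epsilon ab)a_aa_b+\beta(\epsilon ab)a_a^{\dagger}a_b\bigr)^{\dagger}\bigl(\alpha(\epsilon cd)a_ca_d+\beta(\epsilon cd)a_c^{\dagger}a_d\bigr)\ge 0$, expands it with the completeness relation, and optimizes over $\alpha,\beta$ to get $\tfrac1n(V+V^{\dagger})\ge -2\,a_a^{\dagger}a_a+O(\tfrac1n)$ directly, with no need to control $V^{\dagger}V$ sector by sector. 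You should either adopt that sum-of-squares route or supply a genuine proof of the corrected norm bound.
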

\begin{proof}
Consider the following non-negative definite operator ($(abc):=Tr(T_a T_b T_c)$)
\begin{equation}
(\alpha (\epsilon a b)a_a a_b+ \beta (\epsilon a b) a_a^{\dagger}a_b)^{\dagger} (\alpha (\epsilon c d)a_c a_d + \beta (\epsilon c d ) a_c^{\dagger} a_d) \geq 0,~~~~\alpha, \beta \in \mathbb{R}.
\end{equation}
After multiplying the parentheses, using the completeness relation (\ref{completeness relation1}), and noting that $(a^{\dagger}a) \geq \frac{1}{n}(a^{\dagger}a^{\dagger} aa) + O(\frac{1}{n})$ we get 
\begin{align}
-\frac{\alpha \beta}{n}((a^{\dagger}a a a )+(a^{\dagger}a^{\dagger}a^{\dagger}a)) \leq \frac{1}{n}(\alpha^2 + \beta^2 )(a^{\dagger}a) +O(\frac{1}{n})
\end{align}
By taking $-\alpha \beta<0$ and choosing the optimal constants $\alpha, \beta$ we get
\begin{equation}
\frac{1}{n}(Tr( a^{\dagger} a^{\dagger} a^{\dagger} a)+Tr(a^{\dagger} aaa)) \geq -2 Tr(a^{\dagger} a)+O(\frac{1}{n})
\end{equation} 
and thus $H_n \geq \delta (\frac{\omega}{\delta}-2) a_a^{\dagger} a_a +O(\frac{1}{n}) \geq O(\frac{1}{n})$ provided $\frac{\omega}{\delta} > 2$.
\end{proof}

Using Lemma 2 with $\alpha=\tilde{ \omega} (1 - \epsilon)$, $\beta =\frac{g}{4 \tilde{\omega}^2}$ and $\gamma=4$ and Lemma 3 with $\omega= \tilde{ \omega}\epsilon $, $\delta= \frac{g}{ \tilde{\omega}^2} $ and $\epsilon >\frac{g}{\tilde{\omega}^3}$ we get quite a useful bound.

 \begin{theorem}
 The Hamiltonian of the $U(n)$ invariant anharmonic oscillator (\ref{1matrix}) is bounded below at large n by the following operator
 \begin{align}
H_{N}^{-}:= \frac{\Omega}{n^4} B^{\dagger} B+  \sum_{\lambda} G(\lambda) P_{\lambda}+ \frac{1}{2} \left(e_0^{(0)}+ \tilde{e}_0\right) n^2 \mathbb{I}  +O(\frac{1}{n})
\end{align}
with $G(\lambda)= \alpha|\lambda|  + 4 \beta  \sum_{i=2} i \lambda_i $, $\Omega= \alpha + 4 \beta $ and $\tilde{e}_0= -\frac{2 \beta^2}{\alpha + 4 \beta }$, $\alpha=\tilde{ \omega}(1-\epsilon)$, $\beta =\frac{g}{4 \tilde{\omega}^2}$,  $ B=A+\frac{\beta n^3}{\alpha+ 4 \beta} \mathbb{I}$ and $\epsilon >\frac{g}{\tilde{\omega}^3}$, where $P_{\lambda}$ is the orthogonal projection onto the subspace $span(\psi_{k,\lambda}),~k=0,1,...$ with $\lambda_4=0$.
 \end{theorem}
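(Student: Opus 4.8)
The plan is to start from the optimized Fock space form (\ref{quantum anharmonic}) of $H_N$, rewrite it as the three-line splitting (\ref{Hneps1})--(\ref{Hneps3}), bound the three lines separately with Lemma~\ref{lemma2}, with Lemma~3, and with a direct $O(1/n)$ estimate, and then add the pieces. The additive constant and the vanishing of the residual operator are supplied by Lemma~\ref{lemma1}.

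First I would establish (\ref{Hneps1})--(\ref{Hneps3}) itself. Starting from (\ref{quantum anharmonic}), I divide by two and re-expand $:\!Tr(M^4)\!:\,=\tfrac{1}{4\tilde\omega^2}:\!Tr\big((a+a^\dagger)^4\big)\!:$ using $x_a=(a_a+a_a^\dagger)/\sqrt{2\tilde\omega}$; cyclicity of the trace together with the completeness relation (\ref{completeness relation1}) collapses the sixteen monomials into $A+A^\dagger$, into $4\,Tr(a^\dagger a^\dagger a^\dagger a)+4\,Tr(a^\dagger aaa)$, into $4\,Tr(a^\dagger a^\dagger aa)$, and into $2:\!Tr(a^\dagger a a^\dagger a)\!:$. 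Introducing a free parameter via $\tilde\omega\,a_a^\dagger a_a=\tilde\omega(1-\epsilon)a_a^\dagger a_a+\tilde\omega\epsilon\,a_a^\dagger a_a$ with $\epsilon\in(0,1)$ and distributing the quartic monomials as displayed produces exactly (\ref{Hneps1})--(\ref{Hneps3}), the $\tfrac12 R_N$ of (\ref{quantum anharmonic}) being swept into the implicit remainder.

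Next I apply the lemmas. Line (\ref{Hneps1}) is literally of the form (\ref{operator1}) with $\alpha=\tilde\omega(1-\epsilon)$, $\beta=\tfrac{g}{4\tilde\omega^2}$, $\gamma=4$, so Lemma~\ref{lemma2} says it acts on each basis vector $\psi_{\lambda,k}$ of the invariant subspace $\mathcal I^{(n)}_{\tilde\omega}$, up to $O(1/n)$, as $\tfrac{\Omega}{n^4}B^\dagger B+\big(-\tfrac{\beta^2}{\alpha+4\beta}n^2+G(\lambda)\big)\mathbb I$ with $\Omega=\alpha+4\beta$, $G(\lambda)=\alpha|\lambda|+4\beta\sum_{i\ge2}i\lambda_i$, $B=A+\tfrac{\beta n^3}{\alpha+4\beta}\mathbb I$; since $G$ is independent of $k$ and the $\psi_{\lambda,k}$ are asymptotically orthonormal, this reads $\tfrac{\Omega}{n^4}B^\dagger B+\sum_\lambda G(\lambda)P_\lambda-\tfrac{\beta^2}{\alpha+4\beta}n^2\mathbb I+O(1/n)$ as an operator on $\mathcal I^{(n)}_{\tilde\omega}$ (block-diagonal over $\lambda$, each block a shifted oscillator in the composite variable $B$). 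Line (\ref{Hneps2}) is of the form treated in Lemma~3 with $\omega=\tilde\omega\epsilon$, $\delta=\tfrac{g}{\tilde\omega^2}$, so choosing $\epsilon$ in the admissible range (the paper takes $\epsilon>g/\tilde\omega^3$, which is non-empty since $\tilde\omega^3=\tilde\omega+4g$ by (\ref{opt freq})) makes it asymptotically non-negative and it may be dropped. In line (\ref{Hneps3}), $\tfrac{g}{2n\tilde\omega^2}:\!Tr(a^\dagger a a^\dagger a)\!:$ is $O(1/n)$ on $\mathcal H_{\tilde\omega}$ (as noted before Lemma~\ref{lemma2}, with the planar justification of Section~\ref{perturbations diagrams}) and $R_N$ has vanishing norm on $\mathcal H_{\tilde\omega}$ by Lemma~\ref{lemma1}; both are absorbed into the $O(1/n)$ error. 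Adding the three lines gives $H_N\ge\tfrac{\Omega}{n^4}B^\dagger B+\sum_\lambda G(\lambda)P_\lambda+\big(\tfrac{e_0^{(0)}}{2}-\tfrac{\beta^2}{\alpha+4\beta}\big)n^2\mathbb I+O(1/n)$, which is $H_N^-$ once one sets $\tilde e_0:=-\tfrac{2\beta^2}{\alpha+4\beta}$ so that the constant becomes $\tfrac12(e_0^{(0)}+\tilde e_0)n^2$; the inequality is understood in the asymptotic form-sense of the footnote (fix $\psi\in\mathcal I^{(n)}_{\tilde\omega}$ with finitely many excitations, let $n\to\infty$).

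I expect the main obstacle to be not a single estimate — all three are already isolated in Lemmas~\ref{lemma1}--3 — but the bookkeeping needed to combine them: Lemma~\ref{lemma2} is a state-by-state asymptotic equality, Lemma~3 an asymptotic operator non-negativity, and Lemma~\ref{lemma1}'s remainder controls only the $\mathcal H_{\tilde\omega}$-norm, so one must check that these three "$O(1/n)$" notions are mutually compatible and, in particular, that a single value of $\epsilon$ lies simultaneously in $(0,1)$ and in the window rendering (\ref{Hneps2}) non-negative. The one genuine computation underpinning everything is the re-expansion of $:\!Tr(M^4)\!:$ producing the coefficients $(1,4,4,2)$ above: it is precisely the isolated $A+A^\dagger$ term, with coefficient matching Lemma~\ref{lemma2}'s $\beta$, that lets one complete the square via the shift $B=A+\text{const}\cdot\mathbb I$, which is what makes $H_N^-$ an explicitly diagonalizable lower bound.
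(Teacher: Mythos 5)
Your proposal is correct and follows essentially the same route as the paper, which proves the theorem simply by combining the splitting (\ref{Hneps1})--(\ref{Hneps3}) with Lemma \ref{lemma2} (at $\alpha=\tilde\omega(1-\epsilon)$, $\beta=\tfrac{g}{4\tilde\omega^2}$, $\gamma=4$) and Lemma 3 (at $\omega=\tilde\omega\epsilon$, $\delta=\tfrac{g}{\tilde\omega^2}$), discarding the $O(1/n)$ remainder in (\ref{Hneps3}). Your additional checks — the $(1,4,4,2)$ coefficients from re-expanding $:\!Tr(M^4)\!:$, the non-emptiness of the window $\tfrac{g}{\tilde\omega^3}<\epsilon<1$ via (\ref{opt freq}), and the factor-of-two bookkeeping in $\tilde e_0$ — are all consistent with the paper and fill in details it leaves implicit.
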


 Since $H_{N}^{-} \geq \frac{1}{2} \left(e_0^{(0)}+ \tilde{e}_0\right)  n^2 \mathbb{I}  +O(\frac{1}{n})$, we can easily find a lower bound for $\frac{e_0}{2}$ (and for the whole spectrum) in the planar limit

\begin{align}
\frac{e_0^{(lower)}}{2}:=\frac{e_0^{(0)}}{2} -\frac{\beta^2}{\alpha+4 \beta}.
\end{align}
This also shows the mechanism how the $\frac{1}{n} Tr(a^{\dagger 4} + a^4)$ part of the Hamiltonian (so terms of order $O(n)$) shifts the whole spectrum by  $const. n^2$.
 We refer the reader to Table \ref{table1} for several numerical values.\\
 \begin{table}[H]
\caption{Comparison of the exact ground state energies $e_0$ \cite{planar} with the optimized Fock space ground state energy (the gaussian upper bound $e_0^{(0)}/2$) and the lower bound $ e_0^{(lower)}/2$ from Section \ref{spectral bounds AO} for several values of the coupling constant $g$, at $n= \infty$.}
\label{table1}
\begin{center}
    \begin{tabular}{ | l | l | l |l |l |l | p{5cm} |}
    \hline
g & $e_0^{(0)}/2$& $ e_0^{(lower)}/2 $ & $e_0$  \\ \hline
 0.01 &0.505 & 0.505 & 0.505  \\ \hline
 0.1 & 0.543&0.542&0.542 \\        \hline
0.5 &0.653&   0.651&  0.651  \\ \hline
1.0 &  0.743&   0.740 &  0.740  \\  \hline
50 & 2.235& 2.214& 2.217  \\        \hline
1000 & 5.968 & 5.907& 5.915   \\        \hline
$g \rightarrow \infty $  & 0.59527 $g^{\frac{1}{3}}$ & 0.589075 $g^{\frac{1}{3}}$   & 0.58993 $g^{\frac{1}{3}}$   \\        \hline
    \end{tabular}
    \end{center}
    \end{table}

\section{Membrane Matrix Models}
\label{chapter mmm}
We will start with the optimized Fock space decomposition for the MMM (\ref{generalresc}) and obtain an upper bound for the ground state energy for arbitrary $d$. Later, we will produce a lower bound for the spectrum based on a direct generalization of Theorem 1 (and the preceding lemmas) to the multi-matrix case. We restrict the Hamiltonian to the constrained by (\ref{constraints}) subspace of $\mathcal{H}_{\omega}$ i.e. to the $SU(n)\times SO(d)$ invariant sector and thus it is natural to assume that the optimized Fock space frequencies are all the same $ \tilde{\omega}_{I} \equiv \tilde{\omega}_{a_I i_I}= \tilde{\omega} ~~ \forall I,  I=(a_I,i_I),~~a_I=1,...,n^2-1,~~i_I=1,...,d$ and consider only Fock spaces with all the frequencies equal. 
 Using  
 \begin{equation}
  c_{IJKL} =\frac{1}{2} f_{a a_I a_K}f_{a a_J a_L}\delta_{i_I i_J}\delta_{i_K i_L}
 \end{equation}
 
 with $f^{(n)}_{abc}=\frac{2 \pi n^{\frac{3}{2}}}{i}Tr(T_a \left[T_b,T_c \right])$ and the completeness relation 
\begin{equation}
 (T_a)_{ij}(T_a)_{kl}=\delta_{jk} \delta_{il}-\frac{1}{n}\delta_{ij} \delta_{kl},
 \label{completeness relation}
\end{equation}  
  we get 
 \begin{align}
 \sum_{K}\frac{c^{(N)}_{KK IJ}}{\omega_K \sqrt{\omega_I \omega_J}}=\sum_{K}\frac{c^{(N)}_{ IJ KK}}{\omega_K \sqrt{\omega_I \omega_J}}=(2 \pi)^2  n^4 \delta_{a_I a_J}\delta_{i_I i_J}\frac{d}{\omega^2},\\
\sum_{K}\frac{c^{(N)}_{K IJK}}{\omega_K \sqrt{\omega_I \omega_J}}=\sum_{K}\frac{c^{(N)}_{IKKJ}}{\omega_K \sqrt{\omega_I \omega_J}}=-(2 \pi)^2  n^4 \delta_{a_I a_J}\delta_{i_I i_J}\frac{1}{\omega^2},\\
c^{(N)}_{KIKJ}=c^{(N)}_{IKJK}=0,\\
\sum_{IJ}\frac{c^{(N)}_{II JJ}}{\omega_I \omega_J}=(2 \pi)^2 n^4(n^2-1)\frac{d^2}{\omega^2},\\
\sum_{IJ} \frac{c^{(N)}_{I JJI}}{\omega_I \omega_J}=-(2 \pi)^2 n^4(n^2-1)  \frac{d}{\omega^2},\\
\end{align}
and therefore
\begin{align}
f(n)=\frac{\pi^2 d(d-1)(n^2-1)}{\omega^2} n^4,\\
A_{IJ}^{(N)}=\frac{4\pi^2 (d-1)}{\omega^2} n^4 \delta_{IJ},~~~\gamma_{crit}=2,\\
A_{IJ}^{(N \pm)}=\frac{4\pi^2 (d-1)}{\omega^2} \frac{n^4}{(n^2-1)^2} \delta_{IJ} \pm \omega \delta_{IJ}.
\label{MMM tensors}
\end{align}
Then the condition (\ref{diagonallimit}) is equivalent to
 \begin{align}
\omega =\lim_{n \rightarrow \infty } (n^2-1)^{-2}A^{(N)}_{II}=\lim_{n \rightarrow \infty } n^{-4}\sum_{K}\frac{c^{(N)}_{(I I KK)}}{8 \omega_K \omega_I} 
= (2 \pi)^2  \frac{1}{\omega^2}(d-1),
\label{optimized frequencies MMM}
\end{align} 
which shows that $\tilde{\omega} = \sqrt[3]{4 \pi^2 (d-1)}$. Therefore for the choice $\omega=\tilde{\omega}, ~\beta_N= -\frac{1}{2}d(n^2-1)\tilde{\omega}-\frac{\pi^2 d(d-1)(n^2-1)}{\tilde{\omega}^2}$ the renormalized Hamiltonian becomes
\begin{align}
H_N+\beta_N \mathbb{I}=2 \tilde{\omega} \sum_{i,a} a_{ia}^{\dagger}a_{ia}  +n^{-4}:V_N:+R_N\\
=2 \tilde{\omega} \sum_{i,a} a_{ia}^{\dagger}a_{ia} +\frac{4 \pi^2}{n}\left(Tr(abcd)-Tr(acbd)\right):x_{ia}x_{ib}x_{jc}x_{jd}:+R_N,
\label{quantum membrane}
\end{align}
where $R_N \propto  \frac{1}{n^2}\sum_{i,a} (a_{ia}^{\dagger}a_{ia}^{\dagger}+a_{ia}a_{ia})\propto O(\frac{1}{n^2})$, hence  $||R_N \psi|| \rightarrow 0~~ \forall \psi \in \mathcal{H}_{\tilde{\omega}}$, and the ground state energy for the optimized Fock space approximation (i.e. the gaussian upper bound) is given by 
\begin{equation}
 e_0^{(0)}= -\lim_{n\rightarrow \infty} \frac{\beta_N}{n^2}=\frac{ \tilde{\omega}d }{2}+\frac{\pi^2d (d-1)}{\tilde{\omega}^2}=\frac{3 d(d-1)^{\frac{1}{3}}\pi ^{\frac{2}{3}}}{ 2^{\frac{4}{3}}}.
 \label{gaussian bound MMM}
\end{equation}
Let us point out that the scaling given by $\gamma_{crit}$ leads in fact to 't Hooft's scaling, since the quartic potential in the rescaled Hamiltonian (\ref{quantum membrane}) is multiplied by $\frac{1}{n}$. Moreover, this is the only scaling for which the optimized Fock space frequency is non-trivial, i.e. $0< \tilde{\omega} < \infty$ (cp. (\ref{opt freq}) for the AO, where 't Hooft's coupling is the only one leading to a non-trivial redefinition of the optimized mass/frequency). As we will see in Section \ref{perturbations diagrams}, $4\tilde{\omega}$ provides a crude approximation for the mass gap in the $SU(\infty) \times SO(d) $ invariant sector.

\subsection{Spectral bounds}
\label{spectral bounds MMM}
In this section we will present estimates for matrix elements of various parts of the Hamiltonian (\ref{quantum membrane}) leading to a lower bound for the ground state energy in the planar limit, which is a direct generalisation of the procedure introduced in Section \ref{spectral bounds AO} for the AO. 
In order to obtain a lower bound for the spectrum, one has to take into account the quartic interaction term $n^{-4}:V_N:$ in the $SO(d) \times SU(n)$ invariant sector. As for the AO, one can show that the only divergent matrix elements of $H_N+\beta_N \mathbb{I}$ (\ref{quantum membrane}) in the appropriately modified partitions basis of the space of invariants are those coming from the operators $Tr(a^{\dagger 4}), Tr(a^4)$ due to the fact that the optimal choice of the Fock space frequencies (\ref{optimized frequencies MMM}) eliminates the $Tr(a^{\dagger 2}+a^2)$ part from the game.
As previously, a special treatment of the $Tr(a^{\dagger 4}+a^4)$ part of $H_N$ is the main point of our construction.

Let us introduce two composite annihilation operators
\begin{align}
A&:= Tr(abcd)a_{ai}a_{bi}a_{cj}a_{dj} \equiv (iijj),\\
B&:=Tr(abcd)\left(\frac{1}{d+1}  a_{ai}a_{bi}a_{cj}a_{dj}-\frac{1}{2}  a_{ai}a_{bj}a_{ci}a_{dj}\right)
\equiv \frac{1}{d+1} (iijj)-\frac{1}{2}  (ijij),\\
\left[A, A^{\dagger} \right]&=2 d n^4 (d+1)+O(n^2),\\
\left[B, B^{\dagger} \right]&=\frac{d(d+2)(d-1)}{d+1}n^4+O(n^2),\\
\left[A, B^{\dagger} \right]&=\left[B, A^{\dagger} \right]=O(n^2),
\end{align}
and the following $SU(N)\times SO(d)$ invariant states, keeping the dependence on $A$ and $B$ explicit, cp. (\ref{Abasis}),
\begin{equation}
\psi_{kl\Lambda}:= \mathcal{N}_{kl\Lambda}(a^{\dagger})^{\Lambda}(A^{\dagger})^k (B^{\dagger})^l \psi_0,~~ \mathcal{N}_{kl\Lambda} \propto n^{-\frac{|\lambda_{\Lambda}|+4k+4l}{2}}
\label{partitions basis2}
\end{equation}
where $\Lambda= \lbrace \lambda_{\Lambda}, I_{\Lambda}\rbrace$ is a partition of some $k \in \mathbb{N}$ equipped with the $SO(d)$ invariant structure of  the state $\psi_{kl\Lambda}$, i.e. $\lambda_{\Lambda} \vdash k$. $I_{\Lambda}$ is a sequence containing the information about $SO(d)$ indices "compatible"  with the $SU(n)$ partition structure\footnote{The Hamiltonian asymptotically stabilizes a subspace of $SU(n) \times SO(d)$ invariants, where contractions between $SO(d)$ indices do not occur between different $SU(n)$ traces. One way to see this is to use (\ref{consise potential1})-(\ref{consise potential}) and the completeness relation (\ref{completeness relation}) and note that the only terms violating the invariance of this subspace are those coming either from the "$\frac{1}{n}$ part" of the completeness relation or from contractions between non-adjacent indices producing subleading terms in $n$.} , e.g. for $\lambda=  (2^2)$, i.e. $\lambda_2=2, \lambda_i=0, ~i\neq 2 $, a possible $I$ could be $I=(ii,jj)$, but not $I=(ij,ij)$. The corresponding states $\psi_{kl\Lambda}$ would be then
\begin{align}
\psi_{kl\Lambda}= \mathcal{N}_{kl\Lambda}Tr(ab)Tr(cd) a^{\dagger}_{ia} a^{\dagger}_{ib} a^{\dagger}_{jc} a^{\dagger}_{jd}(A^{\dagger})^k (B^{\dagger})^l \psi_0, ~~~I_{\Lambda}=(ii,jj),
\end{align}
and the not allowed state
\begin{align}
\psi_{kl\Lambda}= \mathcal{N}_{kl\Lambda}Tr(ab)Tr(cd) a^{\dagger}_{ia} a^{\dagger}_{jb} a^{\dagger}_{ic} a^{\dagger}_{jd}(A^{\dagger})^k (B^{\dagger})^l \psi_0, ~~~I_{\Lambda}=(ij,ij).
\end{align}
We denote the subspace spanned by $\psi_{kl \Lambda}$ by $\mathcal{I}_{\tilde{\omega}}^{(n)}$. Note that the basis $(\ref{partitions basis2})$ is not orthonormal and at large $n$ we have the orthogonality relation
\begin{equation}
\langle \psi_{kl\Lambda}, \psi_{k'l'\Lambda'}\rangle=\delta_{k k'}\delta_{l l'} \delta_{\lambda_{\Lambda} \lambda_{\Lambda'}} G(I_{\Lambda},I_{\Lambda'}),
\end{equation}
where $G(I_{\Lambda},I_{\Lambda'})$ is the Gram matrix of (\ref{partitions basis2}) restricted to $span(\psi_{kl\Lambda})$ with $k,l,\lambda_{\Lambda}$ fixed. 

The normal ordered quartic potential in (\ref{quantum membrane}) can be rewritten as

\begin{align}
V_n= \frac{\pi^2}{n \tilde{\omega}^2}\left( (i^{\dagger}i^{\dagger}j^{\dagger}j^{\dagger})+(iijj) -(i^{\dagger}j^{\dagger}i^{\dagger}j^{\dagger}) - (ijij)\right)\label{consise potential1}\\
+ \frac{2\pi^2}{n \tilde{\omega}^2}\left( (i^{\dagger} i^{\dagger} jj) + (i^{\dagger} j^{\dagger} ji)-2(i^{\dagger} j^{\dagger} ij)
-:(i^{\dagger} j i^{\dagger} j):+\frac{1}{2}:(i^{\dagger} i j^{\dagger} j):+\frac{1}{2}:(i^{\dagger} j j^{\dagger} i)):\right) \\+
\frac{2 \pi^2}{n \tilde{\omega}^2}\left((i^{\dagger}i^{\dagger}j^{\dagger}j)+(j^{\dagger}i^{\dagger}i^{\dagger}j)-2(i^{\dagger}j^{\dagger}i^{\dagger}j)+
(i^{\dagger}ijj)+(j^{\dagger}iij)-2(i^{\dagger}jij)
\right),
\label{consise potential}
\end{align}
and therefore it is useful to introduce the following operators asymptotically stabilizing the subspaces $ span(\psi_{kl\Lambda},\psi_{k+1, l-1,\Lambda},\psi_{k-1, l+1,\Lambda})_{\lambda_{\Lambda}=\lambda} $ (which means that they only mix up the $SO(d)$ indices and not the $SU(n)$ partition structure)
\begin{align}
\frac{1}{n}  S_1:= \frac{1}{n} Tr(abcd)a_{ai}^{\dagger}a^{\dagger}_{bi}a_{cj}a_{dj} \equiv \frac{1}{n}(i^{\dagger} i^{\dagger} j j),\\
\frac{1}{n} S_2:=\frac{1}{n} Tr(abcd)a_{ai}^{\dagger}a^{\dagger}_{bj}a_{cj}a_{di}\equiv  \frac{1}{n} (i^{\dagger} j^{\dagger} j i),\\
\frac{1}{n} S_3:=\frac{1}{n} Tr(abcd)a_{ai}^{\dagger}a^{\dagger}_{bj}a_{ci}a_{dj} \equiv \frac{1}{n} (i^{\dagger} j^{\dagger} i j),\\
\end{align}
as well as
\begin{align}
\frac{1}{n} T_1:=\frac{1}{n}Tr(abcd)a_{ai}^{\dagger}a_{bi}a_{cj}a_{dj} \equiv \frac{1}{n}(i^{\dagger} i j j),\\
\frac{1}{n} T_2:=\frac{1}{n}Tr(abcd)a_{ai}^{\dagger}a_{bj}a_{cj}a_{di}\equiv \frac{1}{n}(i^{\dagger} jj i),\\
\frac{1}{n} T_3:=\frac{1}{n}Tr(abcd)a_{ai}^{\dagger}a_{bj}a_{ci}a_{dj} \equiv \frac{1}{n}(i^{\dagger} j i j),\\
\end{align}
which change also the $SU(n)$-partition structure (the $\lambda_{\Lambda}$ part). Now we come back to the full Hamiltonian, which can be rewritten in terms of the recently introduced operators
\begin{align}
H_N=2 (1-\epsilon)  \tilde{\omega}(i^{\dagger} i) +\frac{2\pi^2}{\tilde{\omega}^2n}\left(\frac{d-1}{2(d+1)}(
A+A^{\dagger})+B+B^{\dagger}+ S_1+S_2-2S_3\right)\label{ABSoperator} \\
+  2 \epsilon \tilde{\omega}\left[ (i^{\dagger} i)+ \frac{ \pi^2}{ n \tilde{\omega}^3 \epsilon }\left(T_1+T_1^{\dagger}+T_2+T_2^{\dagger}-2(T_3+T_3^{\dagger})\right)\right] \label{Toperator}\\
+ \frac{\pi^2}{n \tilde{\omega}^2}:\left( 
(i^{\dagger} i j^{\dagger} j)+(i^{\dagger} j j^{\dagger} i))-2(i^{\dagger} j i^{\dagger} j)\right):+e^{(0)}_0 n^2 \mathbb{I},\label{lastpart}
\end{align}
for some $0<\epsilon <1$, cp. (\ref{Hneps1})-(\ref{Hneps3}). Following the strategy established for the AO, we will express the action of the first part of the Hamiltonian, i.e. (\ref{ABSoperator}) in terms of $A,B$ and the representation of $S_1+S_2-2S_3$ in the finite dimensional blocks $P_{\lambda,k,l} \mathcal{I}_{\tilde{\omega}}:=span(\psi_{kl \Lambda})_{\lambda_{\Lambda}=\lambda}$, then we will determine for which value of $\epsilon$ the middle part (\ref{Toperator}) is non-negative, which will allow to bound the full Hamiltonian from below by an operator of the form as in (\ref{ABSoperator}) modulo terms of order $O(\frac{1}{n})$ which do not affect the divergent vacuum energy in the planar limit. Let us perform this in several steps.

\begin{lemma}
\label{lemma5}
The action of the operator
\begin{equation}
H_N= \omega a_{ia}^{\dagger} a_{ia}+ \frac{1}{n}(\epsilon_1(A+A^{\dagger}) + \epsilon_2(B+B^{\dagger}) +g(S_1+S_2-2S_3) ) 
\label{operator11}
\end{equation}
on the states $\psi_{k l \Lambda }$, asymptotically at large $n$, becomes
\begin{equation}
H_N \psi_{ k l \Lambda}= \left[\Omega_1  \tilde{A}^{\dagger}\tilde{A}+\Omega_2  \tilde{B}^{\dagger} \tilde{B} +\Omega_3( \tilde{B}^{\dagger}\tilde{A}+ \tilde{A}^{\dagger}\tilde{B})+(\tilde{e}_0 n^2 +g G(\lambda_{\Lambda})+\omega |\lambda_{\Lambda}|)\mathbb{I}  \right] \psi_{ k l \Lambda},
\end{equation}
where $G(\lambda_{\Lambda})$ is the finite dimensional representation of $S_1+S_2-2S_3$ in  $P_{\lambda_{\Lambda},k,l} \mathcal{I}_{\tilde{\omega}}$, 
\begin{align}
\Omega_1=\Omega_1(\omega,g)&=\frac{g}{n^4}\left( \frac{1}{d} +\frac{2}{d(d+1)}-\frac{2(d+3)}{d(d+1)^2} \right)+\frac{2 \omega}{d(d+1) n^4},\\
\Omega_2=\Omega_2(\omega,g)&=\frac{4g(d+3)}{d (d-1)(d+2) n^4}+\frac{4 \omega(d+1)}{d(d-1)(d+2)n^4}, \\
\Omega_3=\Omega_3(\omega,g)&=\frac{4g}{d(d+1) n^4 },\\
n^2 \tilde{e}_0=n^2\tilde{e}_0(\omega,\epsilon_1,\epsilon_2,g)&= \Omega_1 \alpha^2 + \Omega_2 \beta^2 + 2 \Omega_3 \alpha \beta + 
 2 \frac{\epsilon_1 \alpha}{n} + 2 \frac{\epsilon_2 \beta}{n},
\end{align}
and $\tilde{A}=A+\alpha \mathbb{I},~~\tilde{B}=B+\beta \mathbb{I}$ with
\begin{equation}
\label{matrixcondition}
\left(
\begin{array}{c}
 \alpha  \\
 \beta  \\
\end{array}
\right)=-\frac{1}{n}\left(
\begin{array}{cc}
 \Omega_1  & \Omega_3  \\
 \Omega_3  & \Omega_2  \\
\end{array}
\right)^{-1} \left(
\begin{array}{c}
 \epsilon_1  \\
 \epsilon_2  \\
\end{array}
\right).
\end{equation}
\end{lemma}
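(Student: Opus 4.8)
The plan is to follow the strategy of Lemma~\ref{lemma2}, now with two composite operators $A,B$ instead of one, and the extra $SO(d)$ index structure carried by $S_1,S_2,S_3$. First I would dispose of the number operator. As in Lemma~\ref{lemma2}, $a_{ia}^{\dagger}a_{ia}\psi_{kl\Lambda}=(|\lambda_{\Lambda}|+4k+4l)\psi_{kl\Lambda}$ since $\psi_{kl\Lambda}$ is built from $|\lambda_{\Lambda}|+4k+4l$ creation operators. I would then show, repeating the four-case Wick-contraction count of Lemma~\ref{lemma2} (single/double/triple/quadruple contractions, each carrying an overall negative power of $n$ once the shortening and re-lengthening of the partition and the planar factors are balanced), that
\begin{align}
\frac{2}{d(d+1)n^{4}}A^{\dagger}A\,\psi_{kl\Lambda}&\simeq 4k\,\psi_{kl\Lambda}, & \frac{4(d+1)}{d(d-1)(d+2)n^{4}}B^{\dagger}B\,\psi_{kl\Lambda}&\simeq 4l\,\psi_{kl\Lambda},
\end{align}
the coefficients being $4/[A,A^{\dagger}]$ and $4/[B,B^{\dagger}]$ to leading order; the absence of an $A$--$B$ cross term here is guaranteed by $[A,B^{\dagger}]=[B,A^{\dagger}]=O(n^{2})$, which makes $A^{\dagger}$-quanta and $B^{\dagger}$-quanta invisible to one another in the planar limit. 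This already produces the $\omega$-parts of $\Omega_1,\Omega_2$ and explains why $\Omega_3$ carries no $\omega$.

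Next I would analyse $\tfrac{g}{n}(S_1+S_2-2S_3)$. Using $(iijj)=A$ and $(ijij)=\tfrac{2}{d+1}A-2B$ and pushing the two annihilation operators of each $S_i$ through $(a^{\dagger})^{\Lambda}(A^{\dagger})^{k}(B^{\dagger})^{l}$ by Wick's theorem, one keeps only the planar contractions of adjacent $SU(n)$ indices and of adjacent $SO(d)$ indices — all other contractions being $O(1/n)$, which is also what underlies the footnoted claim that $SO(d)$ indices are not shuffled between different $SU(n)$ traces. The surviving contractions split into two families: (i) both annihilation operators land on adjacent slots of one factor $Tr(a^{\dagger\lambda_i})$ of $(a^{\dagger})^{\Lambda}$, which leaves $k,l,\lambda_{\Lambda}$ untouched and acts only on the $SO(d)$ labels $I_{\Lambda}$, i.e. as a finite-dimensional operator on $P_{\lambda_{\Lambda},k,l}\mathcal{I}_{\tilde{\omega}}$ that we name $G(\lambda_{\Lambda})$; (ii) at least one annihilation operator contracts into an $A^{\dagger}$ or a $B^{\dagger}$, which after re-expressing $(iijj),(ijij)$ via $A,B$ yields exactly the terms $A^{\dagger}A$, $B^{\dagger}B$ and $A^{\dagger}B+B^{\dagger}A$ with $g$-dependent coefficients. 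Collecting these with the contributions of the first step reproduces $\Omega_1,\Omega_2,\Omega_3$. One should also check here that $S_1+S_2-2S_3$ together with $A,B$ asymptotically stabilises $\mathrm{span}(\psi_{kl\Lambda},\psi_{k+1,l-1,\Lambda},\psi_{k-1,l+1,\Lambda})_{\lambda_{\Lambda}=\lambda}$, so that the block decomposition indexed by $(\lambda_{\Lambda},k+l)$ is consistent.

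Finally I would treat the linear terms $\tfrac{\epsilon_1}{n}(A+A^{\dagger})$, $\tfrac{\epsilon_2}{n}(B+B^{\dagger})$ by completing the square. On the block $P_{\lambda_{\Lambda},k,l}\mathcal{I}_{\tilde{\omega}}$ the collected operator equals, modulo $O(1/n)$,
\begin{equation}
\Omega_1 A^{\dagger}A+\Omega_2 B^{\dagger}B+\Omega_3(A^{\dagger}B+B^{\dagger}A)+\tfrac{1}{n}\big(\epsilon_1(A+A^{\dagger})+\epsilon_2(B+B^{\dagger})\big)+\big(\omega|\lambda_{\Lambda}|+gG(\lambda_{\Lambda})\big)\mathbb{I}.
\end{equation}
Introducing $\tilde{A}=A+\alpha$, $\tilde{B}=B+\beta$ and demanding that the linear part cancel gives the $2\times2$ linear system $(\ref{matrixcondition})$ for $(\alpha,\beta)$ — solvable because $\left(\begin{smallmatrix}\Omega_1&\Omega_3\\\Omega_3&\Omega_2\end{smallmatrix}\right)$ is nondegenerate, $A,B$ being genuine independent composite oscillators — and the residual c-number is the claimed $n^{2}\tilde{e}_0$, equivalently written $\Omega_1\alpha^{2}+\Omega_2\beta^{2}+2\Omega_3\alpha\beta+\tfrac{2}{n}(\epsilon_1\alpha+\epsilon_2\beta)$ after substituting $(\ref{matrixcondition})$.

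The main obstacle is the second step: the large-$n$ combinatorial bookkeeping of the $S_i$ contractions — identifying precisely which adjacency patterns survive, separating their action into the $SO(d)$-only part $G(\lambda_{\Lambda})$ and the $A^{\dagger}A$/$B^{\dagger}B$/cross parts, and extracting the exact rational-in-$d$ coefficients $\Omega_1,\Omega_2,\Omega_3$. This is where the nontrivial but asymptotically vanishing cross-commutators $[A,B^{\dagger}]=O(n^{2})$ and the ``compatible $SO(d)$ structure'' restriction have to be used carefully; the remainder is the same contraction-counting as in Lemma~\ref{lemma2} together with an elementary $2\times2$ linear algebra.
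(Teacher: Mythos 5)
Your proposal follows essentially the same route as the paper: the number operator is traded for $\frac{4}{[A,A^{\dagger}]}A^{\dagger}A+\frac{4}{[B,B^{\dagger}]}B^{\dagger}B$ on the states $\psi_{kl\Lambda}$, the action of $S_1+S_2-2S_3$ is split via planar Wick contractions into a finite-dimensional block $G(\lambda_{\Lambda})$ plus quadratic expressions in $A,B$ (the paper organizes this step through the asymptotic commutators of $S_i$ with $(i^{\dagger}i^{\dagger}j^{\dagger}j^{\dagger})$ and $(i^{\dagger}j^{\dagger}i^{\dagger}j^{\dagger})$, which is the same bookkeeping you describe), and the linear terms are removed by the shift $\tilde{A}=A+\alpha$, $\tilde{B}=B+\beta$ yielding (\ref{matrixcondition}) and $\tilde{e}_0$. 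The plan is correct and matches the paper's argument.
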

\begin{proof}

Repeating the argument from Lemma 1, as the $SO(d)$ structure does not affect the order in $n$, one gets immediately that 
\begin{equation}
a_{ia}^{\dagger} a_{ia} \psi_{kl\Lambda}\simeq \left[ \frac{4}{n^4}\left( \frac{1}{2d(d+1)} A^{\dagger}A +\frac{d+1}{d(d-1)(d+2)} B^{\dagger}B\right)+|\lambda_{\Lambda}| \right] \psi_{kl\Lambda}, ~~ d>1.
\end{equation}
Using the relations (holding at large $n$)
\begin{align}
[S_1,(i^{\dagger}i^{\dagger}j^{\dagger}j^{\dagger})]& \simeq 2n(d+1)(i^{\dagger}i^{\dagger}j^{\dagger}j^{\dagger}), \label{S1} \\
[S_1,(i^{\dagger}j^{\dagger}i^{\dagger}j^{\dagger})]&\simeq 4n (i^{\dagger}i^{\dagger}j^{\dagger}j^{\dagger}),\\
[S_2,(i^{\dagger}i^{\dagger}j^{\dagger}j^{\dagger})]&\simeq4n (i^{\dagger}i^{\dagger}j^{\dagger}j^{\dagger}),\\
[S_2,(i^{\dagger}j^{\dagger}i^{\dagger}j^{\dagger})]&\simeq 4n (i^{\dagger}j^{\dagger}i^{\dagger}j^{\dagger}),\\
[S_3,(i^{\dagger}i^{\dagger}j^{\dagger}j^{\dagger})]&\simeq 2n (i^{\dagger}j^{\dagger}i^{\dagger}j^{\dagger})+2n (i^{\dagger}i^{\dagger}j^{\dagger}j^{\dagger}),\\
[S_3,(i^{\dagger}j^{\dagger}i^{\dagger}j^{\dagger})]&\simeq 4n (i^{\dagger}i^{\dagger}j^{\dagger}j^{\dagger}),
\label{S3}
\end{align}

one can express the action of $S_1, S_2, S_3$ in terms of $A^{\dagger}A,  B^{\dagger}B$ and $ A^{\dagger}B+B^{\dagger}A$ 
\begin{align}
\frac{1}{n}  S_1\psi_{kl\Lambda} &\simeq\left( \frac{1}{d n^4} A^{\dagger}A+G_1(\lambda_{\Lambda})\right) \psi_{kl\Lambda},\label{S1a}\\
\frac{1}{n}  S_2\psi_{kl\Lambda} &\simeq a_{ia}^{\dagger} a_{ia} \psi_{kl\Lambda} \simeq \left[ \frac{4}{n^4}\left( \frac{1}{2d(d+1)} A^{\dagger}A +\frac{d+1}{d(d-1)(d+2)} B^{\dagger}B\right)+|\lambda_{\Lambda}|\right] \psi_{kl\Lambda},\\
\frac{1}{n}  S_3\psi_{kl\Lambda} &\simeq \left[\frac{d+3}{d(d+1)^2n^4} A^{\dagger}A - \frac{2}{d(d+1)n^4}( B^{\dagger}A+ A^{\dagger}B)-\frac{4}{d(d-1)(d+2)n^4}B^{\dagger}B +G_3(\lambda_{\Lambda})\right] \psi_{kl\Lambda}, \label{S3a}
\end{align}
where $G_1(\lambda_{\Lambda}),G_3(\lambda_{\Lambda})$ are the finite dimensional representations of $S_1, S_3$ in the subspaces $P_{\lambda_{\Lambda},k,l} \mathcal{I}_{\tilde{\omega}}$ corresponding to the partition $\Lambda$. Now, by combining (\ref{S1a})-(\ref{S3a}), one can express the action of $S_1+S_2-2S_3$ in terms of $A^{\dagger}A,  B^{\dagger}B, A^{\dagger}B+B^{\dagger}A$ and the finite dimensional representation $G(\lambda_{\Lambda})$ of $\frac{1}{n}( S_1+S_2-2S_3)$ in  $P_{\lambda_{\Lambda},k,l} \mathcal{I}_{\tilde{\omega}}$, which determines $\Omega_1, \Omega_2,  \Omega_3$. \\
In order to get rid of the linear terms in $A+A^{\dagger}$ and $B+B^{\dagger}$ we introduce two new operators 
\begin{align}
\tilde{A}=A+\alpha \mathbb{I},~~~~\tilde{B}=B+\beta \mathbb{I},
\end{align}
and require that the coefficients in front of the A,B-linear terms are zero, which gives exactly (\ref{matrixcondition}) as well as the shift of the vacuum energy $\tilde{e}_0$.

\end{proof}

Note that in the planar limit, $S_2-S_3 \geq 0$, because $S_2 \simeq (\epsilon ij)^{\dagger}(\epsilon ij)$ and $S_{3}\simeq (\epsilon ji)^{\dagger}(\epsilon ij),~~~ (\epsilon ij)= (\epsilon ab)a_{ia}^{\dagger}a_{jb}^{\dagger} $, and thus 
\begin{align}
(\alpha d_{\epsilon ij}^{\dagger}+\beta d_{\epsilon ji}^{\dagger})(\alpha d_{\epsilon ij}+\beta d_{\epsilon ji})= (\alpha^2 + \beta^2)S_2 + 2 \alpha \beta S_3.
\end{align}
By taking $(\alpha^2 + \beta^2)=1$ and $2 \alpha \beta = -1$ we get $S_2-S_3\geq 0$. It turns out that the operator $S_1-S_3$ is not positive definite, i.e. the eigenvalues of $G(\lambda)$ can be negative. Since the operator in Lemma \ref{lemma5} will be used as a lower bound for the full Hamiltonian, one has to assure that (\ref{operator11}) is non-negative definite. This is the case when $\omega \geq g_3$ because asymptotically
$S_1+S_2-2S_3= S_1 +2(S_2-S_3)-S_2 \geq -S_2 \simeq -na_{ia}^{\dagger}a_{ia}=: -n(i^{\dagger}i)$.

We also need further bounds.

\begin{lemma}
\label{lemma6}
We have 
\begin{equation}
\frac{1}{n}S_1 \leq d (i^{\dagger}i) + O(\frac{1}{n}).
\end{equation}
\end{lemma}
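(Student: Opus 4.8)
The plan is to recast $S_1=(i^{\dagger}i^{\dagger}jj)$ in a manifestly matrix form and then bound it by its ``diagonal'' (single $SO(d)$-flavour) part by means of an operator Cauchy--Schwarz estimate, in the same spirit as the positivity trick in the proof of Lemma 3. Introduce the $n\times n$ matrices with operator entries $M_i:=T_a a_{ai}$ and $M_i^{\dagger}:=T_a a_{ai}^{\dagger}$, so that
\begin{equation}
S_1=Tr(abcd)\,a_{ai}^{\dagger}a_{bi}^{\dagger}a_{cj}a_{dj}=\sum_{i,j}Tr\!\left(M_i^{\dagger}M_i^{\dagger}M_jM_j\right),
\end{equation}
the four operator factors inside the trace already standing in normal order $a^{\dagger}a^{\dagger}aa$, so that no reordering is required and cyclicity of the trace is never invoked. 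Writing $P_i:=M_iM_i$ one has $P_i^{\dagger}=M_i^{\dagger}M_i^{\dagger}$, $Tr(P_i^{\dagger}P_j)=Tr(M_i^{\dagger}M_i^{\dagger}M_jM_j)$, and $S_1=\sum_{i,j}Tr(P_i^{\dagger}P_j)$ is a Hermitian operator.

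First I would note that for any matrix $X$ with operator entries the operator $Tr(X^{\dagger}X)=\sum_{k,l}(X_{lk})^{\dagger}X_{lk}$ is genuinely non-negative. Applying this with $X=P_i-P_j$ and using linearity of the trace gives, as an operator inequality,
\begin{equation}
Tr(P_i^{\dagger}P_j)+Tr(P_j^{\dagger}P_i)\le Tr(P_i^{\dagger}P_i)+Tr(P_j^{\dagger}P_j),\qquad i,j=1,\dots,d.
\end{equation}
Summing over $i,j$, each of the two sums on the left equals $S_1$, while the right-hand side equals $2d\sum_i Tr(M_i^{\dagger}M_i^{\dagger}M_iM_i)$, so that
\begin{equation}
S_1\le d\sum_{i}Tr\!\left(M_i^{\dagger}M_i^{\dagger}M_iM_i\right)=d\sum_{i}Tr(abcd)\,a_{ai}^{\dagger}a_{bi}^{\dagger}a_{ci}a_{di},
\end{equation}
with the flavour $i$ not summed inside the summand on the right. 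It then remains to bound each single-flavour quartic by the corresponding number operator, which is precisely the estimate used in the proof of Lemma 3: applying the completeness relation (\ref{completeness relation}) and normal ordering yields, for each fixed $i$, $\frac{1}{n}\,Tr(abcd)\,a_{ai}^{\dagger}a_{bi}^{\dagger}a_{ci}a_{di}\le a_{ai}^{\dagger}a_{ai}+O(\frac{1}{n})$. Summing over the $d$ flavours and dividing by $n$ then gives $\frac{1}{n}S_1\le d\,a_{ai}^{\dagger}a_{ai}+O(\frac{1}{n})=d\,(i^{\dagger}i)+O(\frac{1}{n})$, which is the claim.

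The only genuinely delicate point is the bookkeeping for matrices with non-commuting entries: one must check that $Tr(X^{\dagger}X)$ is a true non-negative operator (rather than merely non-negative in expectation on some subspace), that no step secretly relies on cyclicity of the trace (which would generate spurious commutator terms), and that the $O(1/n)$ remainders --- which arise only from the $-\frac{1}{n}\delta\delta$ term in (\ref{completeness relation}) and from non-planar (non-adjacent) Wick contractions in the single-flavour estimate --- are $O(1/n)$ in the norm of $\mathcal{H}_{\tilde{\omega}}$ uniformly enough that the sum over the $d=\mathrm{const}$ flavours keeps them subleading. With these points settled the remainder is a routine application of the completeness relation, identical in structure to the estimates already carried out for the anharmonic oscillator.
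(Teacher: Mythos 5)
Your proof is correct, and it takes a genuinely different route from the paper's. The paper bounds $S_1$ by expanding the non-negative quadratic form $(\alpha (\epsilon k k)\delta_{ij} + \beta (\epsilon j i))^{\dagger}(\alpha (\epsilon k k)\delta_{ij} + \beta (\epsilon j i))$ with the completeness relation, which yields $(d\alpha^2+2\alpha\beta)S_1+\beta^2 S_2\ge 0$; choosing $\beta=-\frac{\alpha d}{2}-\epsilon$ converts this into $\frac1n S_1\le \frac{(\alpha d/2+\epsilon)^2}{2\epsilon\alpha}\,\frac1n S_2\le \frac{(\alpha d/2+\epsilon)^2}{2\epsilon\alpha}(i^{\dagger}i)$, and the constant $d$ only emerges after minimizing over $\alpha,\epsilon$. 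You instead exploit the Gram structure $S_1=\sum_{i,j}Tr(P_i^{\dagger}P_j)$, $P_i=M_iM_i$, and a flavour-space Cauchy--Schwarz inequality $Tr\bigl((P_i-P_j)^{\dagger}(P_i-P_j)\bigr)\ge 0$ to reduce $S_1$ to its flavour-diagonal part, $S_1\le d\sum_i Tr(M_i^{\dagger}M_i^{\dagger}M_iM_i)$; the factor $d$ then arises combinatorially from the $d^2$ flavour pairs rather than from an optimization, and all large-$n$ input is deferred to the single-flavour estimate $\frac1n Tr(abcd)\,a^{\dagger}_{ai}a^{\dagger}_{bi}a_{ci}a_{di}\le \sum_a a^{\dagger}_{ai}a_{ai}+O(\frac1n)$ at fixed $i$. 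That last step is the fixed-flavour analogue of the relation $(a^{\dagger}a)\ge \frac1n(a^{\dagger}a^{\dagger}aa)+O(\frac1n)$ which the paper itself invokes without proof in Lemma 3 (it follows from the asymptotically diagonal planar action on the partition basis), so you are on the same footing of rigour there; strictly speaking it is not literally the same statement as in Lemma 3 and deserves the one-line planar-counting justification. What each approach buys: yours is cleaner for this particular operator (no auxiliary $S_2$, no parameter optimization, cyclicity of the trace never used, and the algebraic step $S_1\le d\sum_i Tr(P_i^{\dagger}P_i)$ is exact rather than asymptotic), whereas the paper's quadratic-form template applies uniformly to the other estimates (Lemmas 3 and 7), whose operators do not have your Gram-sum structure. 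Both arguments produce the same constant $d$.
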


\begin{proof}
Consider the following non-negative operator 
\begin{equation}
0 \leq (\alpha (\epsilon k k)\delta_{ij} + \beta (\epsilon j i))^{\dagger}(\alpha (\epsilon k k)\delta_{ij} + \beta (\epsilon j i)),~~ \alpha, \beta \in \mathbb{R}
\end{equation}
where $(\epsilon ji):=Tr(T_{\epsilon} T_a T_b)a_{ja}a_{ib}$. Multiplying the parentheses and using the completeness relation (\ref{completeness relation}) gives asymptotically
\begin{equation}
(d \alpha^2 + 2 \alpha \beta)S_1+\beta^2 S_2\geq 0, ~~ \alpha,\beta \in \mathbb{R}.
\end{equation}
Therefore by taking $\beta= -\frac{\alpha d}{ 2}-\epsilon, ~ \epsilon >0$ and $\alpha > 0  $ we get at large $n$
\begin{equation}
\frac{1}{n}  S_1  \leq \frac{(\frac{\alpha d}{2}+\epsilon)^2}{2 \epsilon \alpha } (i^{\dagger}i).
\end{equation}
The optimal value of the constant is $\min_{\alpha, \epsilon >0} \frac{(\frac{\alpha d}{2}+\epsilon)^2}{2 \epsilon \alpha } = d$.
\end{proof}

Using the last lemma we can prove 
\begin{lemma}
\label{lemma7}
Let $g>0$. Then the operator 
\begin{equation}
H_N=  (i^{\dagger} i) + \frac{g}{n}(T_1+ T_1^{\dagger}+T_2+ T_2^{\dagger}-2(T_3+ T_3^{\dagger}) )
\end{equation}
is non-negative definite at large $n$ for $g < \frac{\sqrt{d}}{2(d-1)}  $.

\end{lemma}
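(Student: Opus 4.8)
The plan is to reproduce, in the multi-matrix setting, the argument used for the $U(n)$ anharmonic oscillator (the non-negativity lemma preceding Theorem 1): bound the "$3$ creation $+\,1$ annihilation" part of $H_N$ from below by a multiple of the number operator $(i^\dagger i)\equiv a_{ia}^\dagger a_{ia}$ by exhibiting a manifestly non-negative operator of the form $W^\dagger W\ge 0$, where $W=\alpha M+\beta N$ ($\alpha,\beta\in\mathbb{R}$) is a real linear combination of a two-annihilation-operator object $M$ and an $a^\dagger a$-type object $N$, both built from $Tr(\cdots)$-contracted ("$(\epsilon ab)$"-type, i.e.\ off-diagonal in $su(n)$) products and carrying the $SO(d)$ index structure dictated by the commutator form $Tr([X_i,X_j]^2)$ of the potential. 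Expanding $W^\dagger W$, using the completeness relation (\ref{completeness relation}) and keeping only planar Wick contractions of adjacent indices (cf.\ the footnote to Lemma \ref{lemma2}), the cross term $\alpha\beta(M^\dagger N+N^\dagger M)$ should produce, at large $n$, precisely a multiple of $\tfrac1n(T_1+T_1^\dagger+T_2+T_2^\dagger-2(T_3+T_3^\dagger))$ — the relative weights $(1,1,-2)$ being forced by the $[X_i,X_j]^2$ structure — while the diagonal pieces $\alpha^2 M^\dagger M$ and $\beta^2 N^\dagger N$ collapse to normal-ordered quartic operators of $S_1,S_2,S_3$ flavour, plus reordering terms proportional to $n\,(i^\dagger i)$ (the latter being $O(n)$ rather than $O(n^2)$ precisely because the $\delta_{ab}$ part of the completeness relation is $\tfrac1n$-suppressed).

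Next I would bound these diagonal quartic operators from above by a multiple of $(i^\dagger i)$ modulo $O(\tfrac1n)$, using the elementary inequality $(i^\dagger i)\ge\tfrac1n(\text{normal-ordered quartic})+O(\tfrac1n)$ (a direct consequence of (\ref{completeness relation})) together with Lemma \ref{lemma6} for the $S_1$-contributions, $S_2\simeq n\,(i^\dagger i)$ for the $S_2$-contributions, and the bounds relating $S_3$ to $S_2$ established in the discussion preceding Lemma \ref{lemma6} for the $S_3$-contributions. Collecting terms this yields an inequality of the schematic form
\[
-\alpha\beta\,\tfrac1n\big(T_1+T_1^\dagger+T_2+T_2^\dagger-2(T_3+T_3^\dagger)\big)\ \le\ \big(c_1\alpha^2+c_2\beta^2\big)(i^\dagger i)+O\!\big(\tfrac1n\big),
\]
with explicit $d$-dependent constants $c_1,c_2$. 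Taking $-\alpha\beta>0$ and minimizing over the ratio $\beta/\alpha$ should give
\[
\tfrac1n\big(T_1+T_1^\dagger+T_2+T_2^\dagger-2(T_3+T_3^\dagger)\big)\ \ge\ -\frac{2(d-1)}{\sqrt d}\,(i^\dagger i)+O\!\big(\tfrac1n\big),
\]
whence $H_N\ge\big(1-\tfrac{2(d-1)}{\sqrt d}\,g\big)(i^\dagger i)+O(\tfrac1n)\ge O(\tfrac1n)$ whenever $g\le\frac{\sqrt d}{2(d-1)}$ (here $d\ge 2$), which is the claim.

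The main obstacle is the $SO(d)$-index bookkeeping in the choice of $M$ and $N$: they must be chosen so that the cross term produces \emph{exactly} the combination $T_1+T_2-2T_3$ with no residual $SO(d)$-contraction left over, while the diagonal terms land in a form controllable by Lemma \ref{lemma6} and the $S_2,S_3$ bounds — and, crucially, with the precise $\alpha,\beta$-dependence needed for the optimization to reproduce the sharp constant $\tfrac{2(d-1)}{\sqrt d}$ rather than a weaker one (this will likely require, as in the proof of Lemma \ref{lemma6}, an auxiliary free parameter attached to a $\delta_{ij}$-trace term inside $M$). A minor point still to be verified is that the genuinely subleading contributions — those coming from the $-\tfrac1n\delta_{ij}\delta_{kl}$ piece of (\ref{completeness relation}) and from non-adjacent Wick contractions — are $O(\tfrac1n)$ and hence do not affect the bound in the planar limit; this is routine given the power counting already employed in Lemma \ref{lemma2}.
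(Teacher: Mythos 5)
Your plan is correct and follows essentially the same route as the paper: the paper takes $K_{\epsilon ij}=\alpha(\epsilon kk)\delta_{ij}+\beta(\epsilon ij)+\gamma_1(\epsilon k^{\dagger}k)\delta_{ij}+\gamma_2(\epsilon i^{\dagger}j)+\gamma_3(\epsilon j^{\dagger}i)$, expands $K_{\epsilon ij}^{\dagger}K_{\epsilon ij}\ge 0$ so that the cross terms produce $T_1,T_2,T_3$ and the diagonal terms are controlled by Lemma \ref{lemma6}, $0\le\frac1n S_2\le(i^{\dagger}i)$ and the normal-ordering relations, then performs the constrained minimization to get the constant $\frac{2(d-1)}{\sqrt d}$ — exactly the structure you describe, including your (correct) anticipation that auxiliary $\delta_{ij}$-attached parameters are needed, though they appear on both the annihilation and the $a^{\dagger}a$ side, so the optimization is over five parameters subject to three normalization constraints rather than over a single ratio $\beta/\alpha$.
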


\begin{proof}
Using a similar strategy as in Lemma \ref{lemma6}, we consider another non-negative operator of the form $K_{\epsilon ij}^{\dagger}K_{\epsilon ij}$ with
\begin{align}
K_{\epsilon ij}:=(\alpha( \epsilon kk)\delta_{ij}+  \beta (\epsilon ij)+ \gamma_1 (\epsilon k^{\dagger}k)\delta_{ij}+\gamma_2(\epsilon i^{\dagger}j)+\gamma_3 (\epsilon j^{\dagger}i)).
\end{align}
By computing $K_{\epsilon ij}^{\dagger}K_{\epsilon ij}$ explicitly and using the completeness relation (\ref{completeness relation}) we get at large $n$
\begin{align}
\frac{1}{n}\left[(\alpha  (\gamma_1 d+\gamma_2 + \gamma_3) +\gamma_1 \beta)(T_1+ T_1^{\dagger})+ \beta \gamma_2 (T_2+ T_2^{\dagger})+ \beta \gamma_3 (T_3+ T_3^{\dagger}) \right] \\
\geq \frac{1}{n}\left[-(\alpha^2 d+2 \alpha \beta)S_1- \beta^2 S_2 -(\gamma_1^2 d +2 \gamma_1 \gamma_2 +2 \gamma_1 \gamma_3)(i^{\dagger}i j^{\dagger}j)
 -2 \gamma_2 \gamma_3 (i^{\dagger}j i^{\dagger}j)-(\gamma_2^2+\gamma_3^2)(i^{\dagger}j j^{\dagger}i) \right]\\
 \geq -\left[ \alpha^2 d^2+2 \alpha \beta d+ \beta^2  +\gamma_1^2 d +2\gamma_1 \gamma_2 +2 \gamma_1 \gamma_3+2\gamma_2 \gamma_3
 +d(\gamma_2^2+\gamma_3^2) \right] (i^{\dagger} i) +O(\frac{1}{n}),
 \label{boundt1t3a}
\end{align}
where in the last step we employ the large $n$ relations 
\begin{align}
(i^{\dagger}i j^{\dagger}j)=:(i^{\dagger}i j^{\dagger}j):+n(i^{\dagger} i),\\
(i^{\dagger}j i^{\dagger}j)= :(i^{\dagger}j i^{\dagger}j):+n(i^{\dagger} i),\\
(i^{\dagger}j j^{\dagger}i)= :(i^{\dagger}j j^{\dagger}i): + d n(i^{\dagger} i),
\end{align}
the fact that $0 \leq \frac{1}{n} S_2 \leq  (i^{\dagger} i) $ and Lemma \ref{lemma6}. Constrained minimization of the constant in (\ref{boundt1t3a})
\begin{align}
\min_{\alpha, \gamma_1, \gamma_2, \gamma_3, \beta} \left[ \alpha^2 d^2+2 \alpha \beta d+ \beta^2  +\gamma_1^2 d +2\gamma_1 \gamma_2 +2 \gamma_1 \gamma_3+2\gamma_2 \gamma_3
 +d(\gamma_2^2+\gamma_3^2)  \right]= \frac{2 (d-1)}{\sqrt{d}},\\
 (\alpha (\gamma_1 d+\gamma_2 + \gamma_3) +\gamma_1 \beta)=1 \label{constr1},\\
 \beta \gamma_2=1, ~~~~\beta \gamma_3=-2 \label{constr2},
 \end{align}
gives the bound
\begin{equation}
\frac{1}{n}(T_1+ T_1^{\dagger}+T_2+ T_2^{\dagger}-2(T_3+ T_3^{\dagger}) )\geq -2\frac{d-1}{\sqrt{d}}   (i^{\dagger} i) + O(\frac{1}{n}).
\label{boundt1t3}
\end{equation} As a consequence, $H_N \geq (1-2 g \frac{d-1}{\sqrt{d}} ) (i^{\dagger} i) \geq O(\frac{1}{n})$, provided $g<\frac{\sqrt{d}}{2(d-1)}$.
\end{proof}

 Lemma \ref{lemma7} implies that the operator in (\ref{Toperator}) is non-negative for $\epsilon >\frac{2 \pi^2 (d-1)}{\tilde{\omega}^3 \sqrt{d}}$. The operator $\frac{\pi^2}{n \tilde{\omega}^2}:\left( 
(i^{\dagger} i j^{\dagger} j)+(i^{\dagger} j j^{\dagger} i))-2(i^{\dagger} j i^{\dagger} j)\right):$ is of order $O(\frac{1}{n})$ and thus it does not contribute to the planar limit. Finally by using Lemma \ref{lemma5} with $\epsilon_1=\frac{\pi ^2 (d-1)}{(d+1)  \tilde{\omega}^2},~~\epsilon_2=\frac{2 \pi ^2}{  \tilde{\omega}^2}$, $g=\frac{2 \pi^2}{\tilde{\omega}^2}$ and $\omega= 2 (1 -\epsilon)\tilde{\omega}$ we get the following bound.

\begin{theorem}
The Hamiltonian of the MMM (\ref{quantum membrane}), asymptotically at large $n$, is bounded below by the following non-negative operator
\begin{align}
H_N^{-} :=\Omega_1  \tilde{A}^{\dagger}\tilde{A}+\Omega_2  \tilde{B}^{\dagger} \tilde{B} +\Omega_3( \tilde{B}^{\dagger}\tilde{A}+ \tilde{A}^{\dagger}\tilde{B}) 
+(e_0^{(0)}+\tilde{e}_0) n^2 \mathbb{I} +\sum_{\lambda,\lambda_4=0,k,l}(gG(\lambda) +\omega |\lambda|)P_{\lambda,k,l}+O(\frac{1}{n})\label{theoremlower}
\end{align}
where $\Omega_1:=\Omega_1(\omega,\frac{2 \pi ^2}{\tilde{\omega}^2}),~\Omega_2:=\Omega_2(\omega,\frac{2 \pi ^2}{\tilde{\omega}^2}), \Omega_3:=\Omega_3(\frac{2 \pi ^2}{\tilde{\omega}^2})$, $\tilde{e}_0:=\tilde{e}_0(\omega, \epsilon_1, \epsilon_2, \frac{2 \pi ^2}{\tilde{\omega}^2})$ as well as $\tilde{A}, \tilde{B}$ are defined in Lemma \ref{lemma5}.
\end{theorem}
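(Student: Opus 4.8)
The plan is to carry out, for the multi-matrix case, the same three-step reduction that proved Theorem~1 for the anharmonic oscillator. The starting point is the decomposition (\ref{ABSoperator})--(\ref{lastpart}) of $H_N$, valid for every $0<\epsilon<1$: it displays $H_N$ as the sum of (i) an operator of exactly the shape treated in Lemma~\ref{lemma5}, with $\omega=2(1-\epsilon)\tilde{\omega}$ and with the linear $A,B$-terms and the $S_1+S_2-2S_3$ block present, (ii) the remainder $2\epsilon\tilde{\omega}[(i^{\dagger}i)+\tfrac{\pi^2}{n\tilde{\omega}^3\epsilon}(T_1+T_1^{\dagger}+T_2+T_2^{\dagger}-2(T_3+T_3^{\dagger}))]$, which is of the form considered in Lemma~\ref{lemma7}, and (iii) a normal-ordered piece $\tfrac{\pi^2}{n\tilde{\omega}^2}:(\cdots):$ together with the divergent constant $e_0^{(0)}n^2\mathbb{I}$.

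First I would fix $\epsilon$ in the window
\begin{equation}
\frac{1}{2\sqrt{d}}\ <\ \epsilon\ \le\ 1-\frac{1}{4(d-1)},
\end{equation}
which is non-empty for $d\ge 2$. The lower constraint, via Lemma~\ref{lemma7} applied with effective coupling $\tfrac{\pi^2}{\tilde{\omega}^3\epsilon}$ (and $\tilde{\omega}^3=4\pi^2(d-1)$), guarantees that part (ii) is non-negative at large $n$, so it may be dropped in a lower bound; the upper constraint will be used below to make $H_N^-$ itself non-negative. Part (iii) is of order $O(1/n)$ by the contraction counting behind Lemmas~\ref{lemma2} and~\ref{lemma5} -- genuinely $O(1/n)$ because the normal ordering removes the planar self-contractions -- so it does not contribute to the planar limit. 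Hence, for any admissible $\epsilon$,
\begin{equation}
H_N\ \ge\ 2(1-\epsilon)\tilde{\omega}\,(i^{\dagger}i)+\frac{2\pi^2}{\tilde{\omega}^2 n}\Bigl(\tfrac{d-1}{2(d+1)}(A+A^{\dagger})+(B+B^{\dagger})+S_1+S_2-2S_3\Bigr)+e_0^{(0)}n^2\mathbb{I}+O(\tfrac1n).
\end{equation}

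The right-hand side, apart from the shift $e_0^{(0)}n^2\mathbb{I}$, is precisely the operator of Lemma~\ref{lemma5} with $\omega=2(1-\epsilon)\tilde{\omega}$, $\epsilon_1=\tfrac{\pi^2(d-1)}{(d+1)\tilde{\omega}^2}$, $\epsilon_2=\tfrac{2\pi^2}{\tilde{\omega}^2}$ and $g=\tfrac{2\pi^2}{\tilde{\omega}^2}$; a direct match of coefficients with (\ref{ABSoperator}) confirms these assignments. Applying Lemma~\ref{lemma5} replaces $A,B$ by $\tilde{A}=A+\alpha\mathbb{I}$, $\tilde{B}=B+\beta\mathbb{I}$ with $\alpha,\beta$ determined by (\ref{matrixcondition}), turns the quadratic/linear part into $\Omega_1\tilde{A}^{\dagger}\tilde{A}+\Omega_2\tilde{B}^{\dagger}\tilde{B}+\Omega_3(\tilde{B}^{\dagger}\tilde{A}+\tilde{A}^{\dagger}\tilde{B})$, sends the constant part to $\tilde{e}_0 n^2+gG(\lambda_{\Lambda})+\omega|\lambda_{\Lambda}|$ on each block $P_{\lambda_{\Lambda},k,l}\mathcal{I}_{\tilde{\omega}}$, and therefore yields exactly (\ref{theoremlower}).

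It remains to see that $H_N^-$ is non-negative. I would split this into: (a) the $2\times 2$ matrix $\bigl(\begin{smallmatrix}\Omega_1&\Omega_3\\\Omega_3&\Omega_2\end{smallmatrix}\bigr)$ is positive semi-definite -- a finite computation in $d$, using $\omega,g>0$ and $d>1$ -- so the $\tilde{A},\tilde{B}$ quadratic form is $\ge 0$; (b) $gG(\lambda)+\omega|\lambda|\ge 0$ on each block, which follows (as in the remark after Lemma~\ref{lemma5}) from $\tfrac1n(S_1+S_2-2S_3)\ge-(i^{\dagger}i)$ once $\omega\ge g$, i.e. once $2(1-\epsilon)\tilde{\omega}\ge\tfrac{2\pi^2}{\tilde{\omega}^2}$, equivalently $\epsilon\le 1-\tfrac{1}{4(d-1)}$ -- exactly the upper constraint imposed above; and (c) $(e_0^{(0)}+\tilde{e}_0)n^2\ge 0$, i.e. the mass/vacuum renormalisation does not overshoot. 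I expect the main obstacle to be bookkeeping rather than conceptual: one must be sure that part (iii), and all non-adjacent and non-planar contractions dropped along the way, are honestly $O(1/n)$ in $\mathcal{H}_{\tilde{\omega}}$-norm on the relevant states; that the admissible $\epsilon$-window is non-empty uniformly in $d$; and that the parameter identifications $(\omega,\epsilon_1,\epsilon_2,g)$ feeding Lemma~\ref{lemma5} really reproduce the coefficients of (\ref{ABSoperator}), so that $\Omega_1,\Omega_2,\Omega_3$ and $\tilde{e}_0$ in (\ref{theoremlower}) come out with the stated arguments.
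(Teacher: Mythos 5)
Your proposal is correct and follows essentially the same route as the paper: the same split (\ref{ABSoperator})--(\ref{lastpart}), Lemma \ref{lemma7} to discard the non-negative middle part for $\epsilon$ in the stated window, the $O(\frac{1}{n})$ dismissal of the normal-ordered remainder, and Lemma \ref{lemma5} with the identical parameter identifications. The only cosmetic difference is in establishing $H_N^-\geq 0$: you check positive semi-definiteness of the $2\times 2$ coefficient matrix in $(\tilde{A},\tilde{B})$ directly, while the paper diagonalizes it by a Bogoliubov rotation into $\tilde{\Omega}_1\hat{A}^{\dagger}\hat{A}+\tilde{\Omega}_2\hat{B}^{\dagger}\hat{B}$ with $\tilde{\Omega}_1,\tilde{\Omega}_2>0$ -- these are equivalent statements.
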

\begin{proof}

It remains to prove that $H_N^{-} \geq 0$ at large $n$.
Note that $H_N^{-}$ can be rewritten, after performing a Bogoliubov transformation
\begin{align}
\tilde{A}&= (\cos(x) \hat{A} +\sin(x) \hat{B})\sqrt{2d(d+1)},\\
\tilde{B}&=( -\sin(x) \hat{A}+ \cos(x)\hat{B}) \sqrt{\frac{d(d+2)(d-1)}{d+1}} ,
\end{align}
with $[\hat{A},\hat{A}^{\dagger}]=[\hat{B},\hat{B}^{\dagger}]=n^4 \mathbb{I}+O(n^2)$ and $[\hat{B},\hat{A}^{\dagger}]=[\hat{A},\hat{B}^{\dagger}]=O(n^2)$, as a sum of non-negative operators. Indeed, by demanding that the mixed terms $\hat{A}^{\dagger}\hat{B}+\hat{B}^{\dagger}\hat{A}$ vanish, one finds the value of $x$ such that 
\begin{align}
H_N^{-}=\tilde{\Omega}_1  \hat{A}^{\dagger} \hat{A}+\tilde{\Omega}_2 \hat{B}^{\dagger}  \hat{B}
 + \sum_{\lambda,\lambda_4=0,m}(gG(\lambda)  +\omega |\lambda|) P_{\lambda} +(e_0^{(0)}+\tilde{e}_0) n^2 \mathbb{I}+O(\frac{1}{n}),
\end{align}
for some $\tilde{\Omega}_1>0, \tilde{\Omega}_2>0$. Moreover, $g G(\lambda)+\omega|\lambda| \geq 0 ~ \forall \lambda$ since $\omega> g$.
\end{proof}
Since $H_N^{-} \geq (e_0^{(0)}+\tilde{e}_0) n^2 \mathbb{I} + O(\frac{1}{n})$, we get a lower bound for $e_0$ in the planar limit \footnote{In Section 5 we give a perturbative argument, by constructing a perturbative series up to the third order in a certain effective coupling constant, that the planar limit is valid for this model, i.e. the neglected operators of order $O(\frac{1}{n})$ do not affect the ground state energy.}
\begin{align}
e_0^{(lower)}:=e_0^{(0)}+\tilde{e}_0.
\end{align}

We refer the reader to Table \ref{table2} for several numerical values.\\

\section{Perturbative expansion}
\label{perturbations diagrams}
 We have seen that the Hamiltonian of an interacting model with a quartic interaction satisfying the assumption in Lemma 1 can be rewritten as a sum of a non-interacting Hamiltonian $\hat{H}_{0,N}:= H_{0,N}+ N e_0^{(0)} \mathbb{I}$ and a normal ordered interaction (w.r.t. the optimized Fock vacuum $\Psi_0(\tilde{\omega})$)
 \begin{equation}
 H_N=\hat{H}_{0,N} + \frac{1}{N^{\gamma}}:V_N:. 
 \label{perturbed_H}
 \end{equation}
 This suggests a possible perturbative expansion around the spectrum of $\hat{H}_{0,N}$. Denote the eigenvalues and eigenstates  of $\hat{H}_{0,N}$ by $E_{k,N}^{(0)}$ and $\psi_{k,N}^{(0)}$ respectively, where the index $k$ takes values in the appropriate index space $\mathcal{J}_N$. Using the standard technique of stationary perturbations in quantum mechanics (Rayleigh-Schrödinger perturbation theory), i.e. assuming that the actual eigenvalues and eigenstates of the full Hamiltonian, $E_{k,N}$ and $\psi_{k,N}$, can be expanded in a power series in $\epsilon_N:= \epsilon N^{-\gamma}$, where $\epsilon$ is a book keeping parameter \footnote{In fact $\epsilon$ is not only a book keeping parameter but also a manifestation of the silent assumption about the existence of a small parameter hidden in the model, which in general does not have to exist. However, we will see that for both matrix models such a parameter exists.}, one can compute a first few terms of the perturbative expansion for finite $N$, take the limit $N \rightarrow \infty$, and in the end put $\epsilon=1$. The corrections to the eigenvalue $E_{k,N}$ of $H_N$ up to the third order read
 \begin{align}
 E_{k,N}&= E_{k,N}^{(0)}+\epsilon_N E_{k,N}^{(1)} + \epsilon_N^2 E_{k,N}^{(2)}+\epsilon_N^3 E_{k,N}^{(3)}+ O(\epsilon^4),\\
 E_{k,N}^{(1)}&=\langle \psi_k^{(0)},:V_N: \psi_k^{(0)} \rangle, \label{perturbative expansion}\\
 E_{k,N}^{(2)}&=\langle \psi_k^{(0)},:V_N: \frac{Q_0}{E_{k,N}^{(0)}-\hat{H}_{0,N} }:V_N: \psi_k^{(0)}\rangle,\\ 
 E_{k,N}^{(3)}&=\langle \psi_k^{(0)},:V_N: \frac{Q_0}{E_{k,N}^{(0)}-\hat{H}_{0,N} }:V_N:\frac{Q_0}{E_{k,N}^{(0)}-\hat{H}_{0,N} } :V_N: \psi_k^{(0)}\rangle \\&-E_{k,N}^{(1)}\langle \psi_k^{(0)},:V_N: \frac{Q_0}{(E_{k,N}^{(0)}-\hat{H}_{0,N} )^2}:V_N: \psi_k^{(0)}\rangle,
 \end{align}
 where $Q_0$ is the orthogonal projection on $span(\psi_k^{(0)})^{\perp}$ and $\frac{Q_0}{E_{k,N}^{(0)}-\hat{H}_{0,N} }:=Q_0(E_{k,N}^{(0)}-\hat{H}_{0,N})^{-1}Q_0$.
It is of course not clear whether the series converges, not even if separate terms do in the limit $n \rightarrow \infty$ and one usually expects that such a series is only asymptotic. However we shall see below that the first three terms turn out to provide an astonishingly accurate approximation for the ground state energy and the spectral gap for the AO and to be consistent with the bounds obtained for the MMM. We will also show that the spectral gap for the MMM remains finite at large $n$ at least up to the $2^{nd}$ order. 

Another important aspect of the considered here perturbative expansion is that it establishes a natural connection between the calculation presented in the previous sections and the planar limit in gauge field theories. In the next subsection we will give examples of planar and non-planar contributions (resp. leading and subleading) to the vacuum energy using a diagramatic representation of certain terms popping up in the above perturbative series (\ref{perturbative expansion}), which also justifies why one can neglect the operators of order $O(\frac{1}{n})$ in (\ref{Hneps3}) and (\ref{lastpart}), at least in the vacuum energy calculations.
 
\subsection{Vacuum energy corrections}
Let us remind that the $0^{th} $ term gives the gaussian variational upper bound and point out that the $1^{st}$ order term is always zero. We will see that (at least up to $3^{rd}$ order) the leading terms of the perturbative contributions to the vacuum energy for the models considered in this note correspond to \emph{connected} planar diagrams in the diagramatic representation and they are proportional to $n^2$.
Let us compute the second and third order corrections to the ground state energy of the AO (\ref{1matrix}). We put $E_{0,N}=N e_0$ and $E_{0,N}^{(0)}=N e_0^{(0)}$
\begin{itemize}
\item $2^{nd}$ order correction for the AO
\begin{align}
\frac{e_{0,N}^{(2)}}{2}=\lim_{n \rightarrow \infty}\frac{\epsilon_N^2}{N}\langle  \psi_0,:V_N: \frac{Q_0}{E_{0,N}^{(0)}-\hat{H}_{0,N} }:V_N: \psi_0 \rangle=\\-\lim_{n \rightarrow \infty} \frac{g^2 \epsilon^2}{ 64 n^4 \tilde{\omega}^5} \langle \psi_0,A A^{\dagger} \psi_0 \rangle
 = -\frac{g^2 \epsilon^2}{ 16 \tilde{\omega}^5},
 \label{2ndorder}
\end{align}
where we have used that $\hat{H}_{0,N} A^{\dagger }\psi_0= (4 \tilde{\omega} +N e_0^{(0)})A^{\dagger }\psi_0$. 

Let us point out the connection between our perturbative expansion and the topological expansion with the leading, genus 0 order, being the planar limit. The contraction occuring in (\ref{2ndorder}) has the form $Tr(T_a T_b T_c T_d) Tr(T_{\tilde{a}} T_{\tilde{b}} T_{\tilde{c}} T_{\tilde{d}}) \delta_{a \tilde{b}}\delta_{b \tilde{a}} \delta_{d \tilde{c}} \delta_{c \tilde{d}}$ and it can be represented graphically as
\begin{center}
\begin{tikzpicture}
    \tikzstyle{every node}=[draw,circle,fill=white,minimum size=4pt,
                            inner sep=0pt]

    \draw[ultra thick] (0,0) node[circle, minimum height=1cm,minimum width=1cm,draw] (1)  {};
    \draw (1/1.41,1/1.41) node[diamond, minimum height=0.3cm,minimum width=0.3cm,draw] (2) [label=right:$a$]   {};
    \draw  (-1/1.41,1/1.41) node[diamond, minimum height=0.3cm,minimum width=0.3cm,draw] (3) [label=right:$b$]  {};
    \draw (-1/1.41,-1/1.41) node[diamond, minimum height=0.3cm,minimum width=0.3cm,draw] (4) [label=right:$c$]  {};
    \draw (+1/1.41,-1/1.41) node[diamond, minimum height=0.3cm,minimum width=0.3cm,draw] (5)  [label=right:$d$] {};
            
             \draw[ultra thick] (1) -- (2);
              \draw[ultra thick] (1) -- (3);
               \draw[ultra thick] (1) -- (4);
                \draw[ultra thick] (1) -- (5);
     
     \draw[ultra thick] (3,0) node[circle, minimum height=1cm,minimum width=1cm,draw] (6)  {};
    \draw (3+1/1.41,+1/1.41) node (7) [label=right:$\tilde{a}$] {};
    \draw (3-1/1.41,+1/1.41) node (8) [label=right:$\tilde{b}$]  {};
    \draw (3-1/1.41,-1/1.41) node (9) [label=right:$\tilde{c}$]  {};
    \draw (3+1/1.41,-1/1.41) node (10)  [label=right:$\tilde{d}$] {};
            
             \draw[ultra thick] (6) -- (7);
              \draw[ultra thick] (6) -- (8);
               \draw[ultra thick] (6) -- (9);
                \draw[ultra thick] (6) -- (10);

\draw (3) to [out=60,in=120] (7);
\draw (4) to [out=-60,in=-120] (10);
\draw (2) to [out=60,in=120] (8);
\draw (5) to [out=-60,in=-120] (9);
    
\end{tikzpicture}
\end{center}

where a ring with 4 outgoing branches represents the trace of a product of 4 matrices $T_a$, a circle at the end represents an annihilation operator while a square stands for a creation operator carrying the indicated index. Thin lines between vertices represent Kronecker deltas resp. Wick contractions between annihilation and creation operators. It is clear that the maximal power of $n$ is attained when adjacent indices in the two traces are contracted, which corresponds exactly to the drawn above planar contraction. If one contracts e.g. $b $ with $\tilde{b}$ and $a$ with $\tilde{a}$ instead then the contribution to (\ref{2ndorder}) is subleading and the resulting diagram cannot be drawn in the plane without intersections of the contraction lines. 
\item $3^{rd}$ order correction for the AO
\begin{align}
\frac{e_{0,N}^{(3)}}{2}=\lim_{n \rightarrow \infty} \frac{\epsilon_N^3}{N} \langle \psi_0,:V_N: \frac{Q_0}{E_{0,N}^{(0)}-\hat{H}_{0,N} }:V_N:\frac{Q_0}{E_{0,N}^{(0)}-\hat{H}_{0,N} } :V_N: \psi_0\rangle  \\
= \lim_{n \rightarrow \infty} \frac{g^3 \epsilon^3}{ 4^5 n^5 \tilde{\omega}^8} \langle \psi_0,A (a^{\dagger}a^{\dagger} a a) A^{\dagger}\psi_0 \rangle  
 = \frac{g^3 \epsilon^3}{ 4^2 \tilde{\omega}^8}.
 \label{3rdorder}
\end{align}

A graphical representation of this contribution has the following form 
    \begin{center}
\begin{tikzpicture}
    \tikzstyle{every node}=[draw,circle,fill=white,minimum size=4pt,
                            inner sep=0pt]

    \draw[ultra thick] (0,0) node[circle, minimum height=1cm,minimum width=1cm,draw] (1)  {};
    \draw (1/1.41,1/1.41) node[diamond, minimum height=0.3cm,minimum width=0.3cm,draw] (2) [label=right:$a$]   {};
    \draw  (-1/1.41,1/1.41) node[diamond, minimum height=0.3cm,minimum width=0.3cm,draw] (3) [label=right:$b$]  {};
    \draw (-1/1.41,-1/1.41) node[diamond, minimum height=0.3cm,minimum width=0.3cm,draw] (4) [label=right:$c$]  {};
    \draw (+1/1.41,-1/1.41) node[diamond, minimum height=0.3cm,minimum width=0.3cm,draw] (5)  [label=right:$d$] {};
            
             \draw[ultra thick] (1) -- (2);
              \draw[ultra thick] (1) -- (3);
               \draw[ultra thick] (1) -- (4);
                \draw[ultra thick] (1) -- (5);
     
     \draw[ultra thick] (3,0) node[circle, minimum height=1cm,minimum width=1cm,draw] (6)  {};
    \draw (3+1/1.41,+1/1.41) node (7) [label=right:$\tilde{a}$] {};
    \draw (3-1/1.41,+1/1.41) node (8) [label=right:$\tilde{b}$]  {};
    \draw (3-1/1.41,-1/1.41) node (9) [label=right:$\tilde{c}$]  {};
    \draw (3+1/1.41,-1/1.41) node (10)  [label=right:$\tilde{d}$] {};
            
             \draw[ultra thick] (6) -- (7);
              \draw[ultra thick] (6) -- (8);
               \draw[ultra thick] (6) -- (9);
                \draw[ultra thick] (6) -- (10);

    \draw[ultra thick] (1.5,-3) node[circle, minimum height=1cm,minimum width=1cm,draw] (11)  {};
    \draw (1.5+1/1.41,-1/1.41-3) node[diamond, minimum height=0.3cm,minimum width=0.3cm,draw] (12) [label=right:$e$]   {};
    \draw  (1.5+1/1.41,1/1.41-3) node[diamond, minimum height=0.3cm,minimum width=0.3cm,draw] (13) [label=right:$f$]  {};
    \draw (1.5+-1/1.41,1/1.41-3) node (14) [label=right:$g$]  {};
    \draw (1.5-1/1.41,-1/1.41-3) node (15)  [label=right:$h$] {};
            
             \draw[ultra thick] (11) -- (12);
              \draw[ultra thick] (11) -- (13);
               \draw[ultra thick] (11) -- (14);
                \draw[ultra thick] (11) -- (15);
                
                \draw (3) to [out=60,in=120] (7);
\draw (9) to [out=225,in=45] (13);
\draw (2) to [out=60,in=120] (8);
\draw (10) to [out=-45,in=0] (12);
\draw (5) to [out=-45,in=180] (14);
\draw (4) to [out=225,in=180] (15);

\end{tikzpicture}
\end{center}

where the bottom circle represents the operator $(a^{\dagger} a^{\dagger}a a)=Tr(T_e T_f T_g T_h) a^{\dagger}_e a^{\dagger}_f a_g a_h $. If one replaces it with the operator $:(a^{\dagger}a  a^{\dagger} a): Tr(T_e T_f T_g T_h) a^{\dagger}_e a^{\dagger}_g a_f a_h$ then the resulting diagram is no longer planar and it leads to a subleading contribution to (\ref{3rdorder}).

These corrections decrease the error of determining the ground state energy at large $g$ to  $1.5 \permil $ and $1.1 \permil$  resp. (see Table \ref{table3}).
\begin{table}[H]
\caption{Comparison of the exact ground state energies $e_0$ \cite{planar} with the optimized Fock space ground state energies up to $3^{rd}$ order of the perturbation expansion $e_0^{(0)}/2,e_0^{(2)}/2, e_0^{(3)}/2$ for several values of the coupling constant $g$ for the AO.} 
\label{table3}
\begin{center}
    \begin{tabular}{ | l | l | l |l |l |l | p{5cm} |}
    \hline
g & $e_0^{(0)}/2$& $e_0^{(2)}/2$ & $e_0^{(3)}/2$ &  $e_0$  \\ \hline
 0.01 &0.505 &0.505& 0.505 & 0.505  \\ \hline
 0.1 & 0.543&0.542& 0.542&0.542 \\        \hline
0.5 &0.653& 0.651& 0.651&  0.651  \\ \hline
1.0 &  0.743& 0.740& 0.740 &  0.740  \\  \hline
50 & 2.235&2.214& 2.219& 2.217  \\        \hline
1000 & 5.968&5.907& 5.922&5.915   \\        \hline
$g \rightarrow \infty $  & 0.59527 $g^{\frac{1}{3}}$ & 0.589075 $g^{\frac{1}{3}}$ &  0.59062 $g^{\frac{1}{3}}$&  0.58993 $g^{\frac{1}{3}}$   \\        \hline
    \end{tabular}
    \end{center}
    \end{table}

For the MMM one can easily repeat the calculation
\item $2^{nd}$ order correction for the MMM

\begin{align}
e_{0,N}^{(2)}=-\lim_{n \rightarrow \infty}\frac{1}{n^4}\frac{\pi^4}{8 \tilde{\omega}^5}\langle \psi_0,\left((iijj)-(ijij) \right)\left( (k^{\dagger}k^{\dagger}l^{\dagger}l^{\dagger})-(k^{\dagger}l^{\dagger}k^{\dagger}l^{\dagger})\right) \psi_0 \rangle =-\frac{6d(d-1)\pi^4}{8 \tilde{\omega}^5}
\end{align}

\item $3^{rd}$ order correction for the MMM
\begin{align}
e_{0,N}^{(3)}= \lim_{n \rightarrow \infty} \frac{1}{n^5}\frac{\pi^6}{32\tilde{\omega}^8}  \langle \psi_0,\lbrace (iijj)-(ijij)\rbrace \lbrace S_1+S_2 -2 S_3 \rbrace \lbrace (i^{\dagger}i^{\dagger}j^{\dagger}j^{\dagger})-(i^{\dagger}j^{\dagger}i^{\dagger}j^{\dagger}) \rbrace\psi_0 \rangle \\ = d \left(d^2+10 d-11\right) \frac{\pi^6}{8 \tilde{\omega}^8},
\end{align}
where we have used the asymptotic formulas (\ref{S1})-(\ref{S3}). A diagramatic representation of these contributions is qualitatively the same as for the AO, which leads to the conclusion that the operator $:((i^{\dagger}i j^{\dagger}j)+(i^{\dagger}j j^{\dagger} i)-2(i^{\dagger}j i^{\dagger} j)):$ in (\ref{lastpart}) is negligible in the planar limit. 

\end{itemize}

The reader is referred to Table \ref{table2} for several numerical values, which show that the proposed perturbative expansion is consistent with our upper and lower bounds. One can see that the operators $Tr(a^{\dagger4}), Tr(a^4)$, i.e. $A,A^{\dagger}$ for the AO and $(iijj), (ijij),(i^{\dagger} i^{\dagger}j^{\dagger}j^{\dagger}), (i^{\dagger}j^{\dagger}i^{\dagger}j^{\dagger})$ for the MMM are the main sources of corrections to the vacuum energy for both models. 
\begin{table}[H]

\caption{Comparison of the ground state energies for the MMM in the $0^{th}$, $2^{nd}$ and $3^{rd}$ order as well as the lower bound $e_0^{(lower)}$ found in Section \ref{spectral bounds MMM}} 
\label{table2}
\begin{center}
    \begin{tabular}{ | l | l | l |l |l | l | p{5cm} |}
    \hline
    d       &     3    &  9   &    15 & 25 & 35 \\ \hline
$e_0^{(0)}$ &  9.653 &  45.968 & 92.324 & 184.158 & 289.562 \\ \hline
$e_0^{(2)}$ &  9.351 &  45.609 & 91.912 & 183.679 & 289.030  \\        \hline
$e_0^{(3)}$ &  9.439 & 45.646 & 91.944  &183.709 & 289.060\\ \hline
$e_0^{(lower)}$ & 9.349 &45.583 &91.887& 183.658 &289.013  \\ \hline
    \end{tabular}
    \end{center}
    \end{table}

\subsection{Spectral gap corrections}

The spectrum of the singlet sector for the AO consists of a divergent vacuum energy and an infinite family of equally spaced excited states i.e. with a constant, finite energy gap $\omega(g)$ \cite{shapiro,Yaffe,planar limit marchesini onofri}:
\begin{equation}
E_{\lambda}= e_0 n^2+ \omega(g) \sum_{j} j \lambda_j.
\end{equation}
The degeneracy of an energy level $E_k=n^2 e_0 + k \omega(g)$ is given by the number of partitions of $k$. This result is restored in our perturbative expansion, even in the $0^{th}$ order, so for $U(n)$ invariant eigenstates of $\hat{H}_{0,N}= \tilde{\omega}(a^{\dagger}a)+ \frac{e_0^{(0)}}{2} n^2 \mathbb{I}$.

For later convenience we choose the second excited eigenstate of $\hat{H}_{0,N}$, i.e. $\psi_{\lambda}= \mathcal{N}_{\lambda} (a^{\dagger} a^{\dagger})\Psi_0(\tilde{\omega})$, $\lambda_i=0, ~i\neq 2,~\lambda_2=1$, as a starting point for the perturbative expansion.
The first order correction is of order $O(1)$
\begin{align}
E_{\lambda}^{(1)}= \lim_{n \rightarrow \infty} \frac{g}{\tilde{\omega}^2n} \langle  \psi_{\lambda},(a^{\dagger}a^{\dagger} a a)\psi_{\lambda}\rangle =\lim_{n \rightarrow \infty} \mathcal{N}_{\lambda}^2 \frac{g}{\tilde{\omega}^2n} \langle \psi_{\lambda}, (aa)(a^{\dagger}a^{\dagger} a a) (a^{\dagger}a^{\dagger}) \psi_{\lambda}\rangle =\frac{g}{\tilde{\omega}^2}.
\end{align}
The second and third order contributions $E_{\lambda}^{(2)},E_{\lambda}^{(3)}$ obtained in this way diverge like $n^2$. However, the divergent parts are exactly equal to the corresponding corrections to the vacuum energy and thus the renormalized energies are finite. 
\begin{align}
& E_{\lambda,R}^{(2)}:=\lim_{n \rightarrow \infty}(E_{\lambda}^{(2)}-e_0^{(2)}n^2) \label{canc1} \\ &= \lim_{n \rightarrow \infty}\left[ \frac{g^2}{4^2\tilde{\omega}^4 n^2} \left(16 \langle \psi_{\lambda},(a^{\dagger}a a a) \frac{Q_0}{E_{\lambda,N}^{(0)}-\hat{H}_{0,N}}(a^{\dagger}a^{\dagger} a^{\dagger} a)\psi_{\lambda}\rangle  + \langle \psi_{\lambda},A \frac{Q_0}{E_{\lambda,N}^{(0)}-\hat{H}_{0,N} }A^{\dagger}\psi_{\lambda}\rangle  \right) -e_0^{(2)}n^2\right] \\ & =-\frac{ 5 g^2}{ \tilde{\omega}^5},
\end{align}
and similarly the $3^{rd}$ order renormalized term
\begin{align}
E_{\lambda,R}^{(3)}:=\lim_{n \rightarrow \infty}(E_{\lambda}^{(3)}-e_0^{(3)}n^2) = \frac{310 g^3}{32\tilde{\omega}^8}.
\label{canc2}
\end{align}

Therefore the spectral gap $\omega(g)$ up to the third order becomes
\begin{equation}
\omega(g)=\frac{1}{2}(2 \tilde{\omega}+E_{\lambda,R}^{(1)}+E_{\lambda,R}^{(2)}+E_{\lambda,R}^{(3)})=\tilde{\omega}+\frac{g}{\tilde{\omega}^2}-\frac{5 g^2}{2 \tilde{\omega}^5}+\frac{155 g^3}{32 \tilde{\omega}^8}.
\end{equation}

Table \ref{table4} shows several numerical values. The accuracy of our approximation is very good, also in the strong coupling limit $g \rightarrow \infty$. The reason why it works so well is the following. In the optimized Fock space formulation the Hamiltonian takes the form, cp. (\ref{Hneps1})-(\ref{Hneps3}),
\begin{align}
H_N +\beta_N \mathbb{I}= \tilde{\omega}\left[(a^{\dagger}a) + \frac{g}{4\tilde{\omega}^3} (\text{interaction terms}) \right],
\end{align}
hence the effective coupling constant is not $g$, but $\tilde{g}:=\frac{g}{4\tilde{\omega}^3}$. By using eq.(\ref{opt freq}), one gets $\max_g \tilde{g}=\frac{1}{16}$, attained in the limit $g \rightarrow \infty$, which shows that the system in the optimized Fock space is coupled weakly even if the coupling $g$ in the original formulation is strong. Let us also point out that each order in $\tilde{g}$ gives the correct scaling of the spectrum as 
\begin{align}
\tilde{\omega}\tilde{g}^k \propto \frac{g^k}{\tilde{\omega}^{3k-1}} \propto g^{\frac{1}{3}}, ~~~ g\rightarrow \infty.
\end{align}

\begin{table}[H]
\caption{Comparison of the exact value of the spectral gap $\omega(g)$ \cite{singlet spectrum mondello onofri} with the perturbative expansion in the optimized Fock space $\omega^{(0)}=\tilde{\omega}$, $\omega^{(1)}$, $\omega^{(2)}$, $\omega^{(3)}$.}

\label{table4}
\begin{center}
    \begin{tabular}{ | l | l | l |l |l |l | p{5cm} |}
    \hline
g & $\omega(g)$& $\omega^{(0)}$ & $\omega^{(1)}$ & $\omega^{(2)}$ & $\omega^{(3)}$ \\ \hline
 2 &2.454 &2.166& 2.592 & 2.382&2.463 \\ \hline
 50 & 6.811 &5.905&7.340& 6.468&6.878 \\        \hline
200 &10.76& 9.319& 11.62& 10.20&10.88  \\ \hline
1000 & 18.37&15.90&19.85 & 17.39&18.58  \\  \hline
    \end{tabular}
    \end{center}
    \end{table}
    
Let us repeat the calculation for the MMM. As the starting point for the perturbation expansion we choose an analogue of the previously considered for the AO $\psi_{\lambda}$, i.e. the first $SO(d)\times SU(n)$ invariant excited state of $\hat{H}_{0,N}= 2 \tilde{\omega}(i^{\dagger} i) + e_0^{(0)}n^2 \mathbb{I}$, namely $\psi_{\Lambda}=\mathcal{N}_{\Lambda} (i^{\dagger}i^{\dagger})\Psi_0(\tilde{\omega}) $ and thus $E_{\Lambda}^{(0)}=4 \tilde{\omega}+e_0^{(0)}n^2$. The first order correction reads

\begin{align}
E_{\Lambda}^{(1)}=\lim_{n \rightarrow \infty}\frac{2 \pi^2}{\tilde{\omega}^2 n} \langle  \psi_{\Lambda},(S_1+S_2-2S_3)\psi_{\Lambda}\rangle =\frac{4 \pi^2}{\tilde{\omega}^2}(d-1),
\end{align}
and the second order is again divergent with the singular part equal to the corresponding correction to the vacuum energy $e_0^{(2)}n^2$. Thus the renormalized contribution

\begin{align}
E_{\Lambda,R}^{(2)}&:=\lim_{n \rightarrow \infty}(E_{\Lambda}^{(2)} - e_0^{(2)}n^2) \label{canc3}= \lim_{n \rightarrow \infty}[- \frac{1}{4\tilde{\omega}} \frac{4 \pi^4}{\tilde{\omega}^4 n^2} \langle \psi_{\Lambda},(T_1+T_2-2T_3)(T_1^{\dagger}+T_2^{\dagger}-2T_3^{\dagger}),\psi_{\Lambda}\rangle  \\&-\frac{1}{8 \tilde{\omega}}\frac{\pi^4}{\tilde{\omega}^4 n^2}  \langle \psi_{\Lambda},((iijj)-(ijij))((k^{\dagger} k^{\dagger} l^{\dagger} l^{\dagger})-(k^{\dagger}l^{\dagger}k^{\dagger}l^{\dagger})),\psi_{\Lambda}\rangle - e_0^{(2)}n^2] \\&= -\frac{40 \pi^4}{\tilde{\omega}^5}(d-1)-\frac{\pi^4}{\tilde{\omega}^5}(d^2+7),
\end{align}
is finite at latge $n$.
Table \ref{table5} contains several numerical values of the renormalized energies $E_{\Lambda,R}^{(k)}=\sum_{i=0}^{k}(E_{\Lambda}^{(i)}-e_0^{(i)}n^2),~k=1,2$. It is apparent that the perturbative series has better convergence properties for higher dimensions $d$, which can be justified by localizing an expansion parameter in the Hamiltonian, cp. (\ref{ABSoperator})-(\ref{lastpart}), 
\begin{align}
H_N +\beta_N \mathbb{I}= 2\tilde{\omega}\left[(i^{\dagger}i) + \frac{\pi^2}{\tilde{\omega}^3} (\text{interaction terms}) \right],
\end{align}
hence the effective coupling constant is $\tilde{g}:=\frac{\pi^2}{\tilde{\omega}^3}=\frac{1}{4(d-1)}$.

\begin{table}[H]

\caption{The perturbative expansion for the renormalized energy (the vacuum energy subtracted) of the first $SO(d)\times SU(n)$ invariant excited state for the MMM at large $n$}

\label{table5}
\begin{center}
    \begin{tabular}{ | l | l | l |l |l |l | p{5cm} |}
    \hline
d &3 & 9 &15& 25& 35 \\ \hline
 $E_{\Lambda,R}^{(0)}$ &17.16 &27.24& 32.82& 39.29& 44.12  \\ \hline
 $E_{\Lambda,R}^{(1)}$ & 21.45 &34.05&41.03&49.11&55.15 \\        \hline
$E_{\Lambda,R}^{(2)}$ &16.09& 31.92& 39.57& 48.09&54.34 \\ \hline
    \end{tabular}
    \end{center}
    \end{table}

\section{Concluding remarks}
We have seen in Section \ref{opt Fock space} that the special choice of the Fock vacuum (resp. Fock space frequencies, $\tilde{\omega}_I$) eliminates one of the divergent parts of the Hamiltonian, i.e. $Tr(a^2+a^{\dagger 2}) \propto O(n)$ and minimizes the vacuum expectation value $\langle \Psi_0(\omega), H_N \Psi_0(\omega) \rangle=e_0^{(0)} n^2$ providing a rigorous upper bound for the ground state energy at large $n$ in a concise way for a rather general family of Hamiltonians with a quartic potential (\ref{general}).
A further study of the remaining divergent operators, i.e. $\frac{1}{n} Tr(a^{\dagger 4}+ a^4)$ allows to produce astonishingly good lower bounds for the ground state energy for the AO as well as for the MMM (Theorems 1 and 2) in the planar limit. A similar mechanism shall lead to the true vacuum in Fock space, however the task is rather challenging. Our results suggest that one should consider bases consisting of non-linear coherent states instead of polynomials in simple $U(n)/SU(n)$ invariants (\ref{partitions basis}), (\ref{partitions basis2}), which would fully capture the divergent behaviour of $\frac{1}{n} Tr(a^{\dagger 4}+ a^4)$. For instance, a possible vacuum state $\phi_0$ for the operator $\frac{1}{n^4} B^{\dagger}B$ from Lemma \ref{lemma2} has the following form
\begin{equation}
\phi_0:= \mathcal{N}e^{\tilde{\gamma} C^{\dagger}} \psi_0,~~ [A,C^{\dagger}]=\mathbb{I}, ~~\tilde{\gamma}=-\frac{\beta n^3}{\alpha+  \beta \gamma},
\label{renormalized vacuum}
\end{equation}
and $C$ can be found recursively as an infinite sum of single-trace operators with increasing length, $C=\sum_{k=0}C_k$, $C_0= \frac{1}{4n^4}A$.

Nevertheless, the two considered here matrix models admit a perturbative formulation, which provides evidence for the validity of the planar limit and it seems to be efficient in approximating the spectrum, including the strong-coupling regime for the AO, allowing to rederive the well-knwon result of Brezin et al \cite{planar} for the ground state energy with high accuracy. In particular, the Hamiltonian for the Membrane Matrix Models (\ref{classmembr}), originally not containing a quadratic term, gains an effective mass in the optimized Fock space and can be treated as a perturbation of a non-interacting Hamiltonian by the original normal ordered quartic potential with the effective coupling constant $\tilde{g}= \frac{1}{4(d-1)}$, which means that the model contains a small hidden parameter. This is perhaps not surprising in the context of the established validity of $1/D$ expansions (where $D=d+1$ is the dimension of space-time) in various Yang-Mills theories, see \cite{1 over d expansion, 1 over d expansion 2}. Our numerical results, Table \ref{table2} and \ref{table5}, show that the perturbative expansion is stable, especially for higher dimensions of the embedding space, the results are more accurate for larger values of $d$, and that it leads to a finite energy gap at least up to the second order. One expects that cancellations of the vacuum contributions (corresponding to vacuum diagrams), like in formulas (\ref{canc1}), (\ref{canc2}) and (\ref{canc3}), should occur in any order, depending on the validity of the linked cluster theorem for this model. Therefore one can conjecture that the spectrum of the properly rescaled family of Membrane Matrix Models (\ref{quantum membrane}) in the large $n$ limit remains purely discrete with a finite spectral gap and a divergent vacuum energy, i.e. a generic energy level $E_{\Lambda}$ can be written as
\begin{equation}
E_{\Lambda}= e_0 n^2 + e_{\Lambda},~~ 0< e_{\Lambda} < \infty. 
\end{equation}

\section*{Acknowledgements}

The author would like to thank J. Hoppe, G. Lambert, E. Langmann, D. Lundholm and O.T. Turgut for valuable discussions, T. Morita and E. Onofri  for e-mail correspondence as well as KTH and the Swedish Research Council for financial support.

\appendix

\section{$SU(2)$ Matrix Models}

We will demonstrate that our method is useful also for finite $N$ by comparing the spectrum computed in the optimized Fock space with the existing results. Consider the $n=2$ Matrix Model
\begin{equation}
H_{d}=p_{ia} p_{ia} + \frac{1}{2} f^{(2)}_{abc}f^{(2)}_{ab'c'}x_{ib}x_{ib'}x_{jc}x_{jc'},
\label{n=2 model}
\end{equation}
Using the Pauli matrices $\sigma_a,~a=1,2,3$, normalized as follows 
\begin{align}
Tr(\sigma_a \sigma_b)= \delta_{ab},~~~\left[\sigma_a, \sigma_b \right]= i \sqrt{2} \sigma_c,
\end{align}
as a representation of the $su(2)$ algebra, we get $ f^{(2)}_{abc}=\frac{2 \pi\cdot 2^{\frac{3}{2}}}{i}Tr(\sigma_a [\sigma_b, \sigma_c])=8 \pi \epsilon_{abc}$. Therefore 

\begin{equation}
H_{d}=p_{ia} p_{ia} + 32 \pi^2 \epsilon_{abc} \epsilon_{ab'c'}x_{ib}x_{ib'}x_{jc}x_{jc'},
\label{n=2 model2}
\end{equation}
In order to compare our computation with the spectrum of the "standard" $SU(2)$ model

\begin{equation}
\tilde{H}_{d}=p_{ia} p_{ia} + \frac{1}{2} \epsilon_{abc} \epsilon_{ab'c'}x_{ib}x_{ib'}x_{jc}x_{jc'},
\label{n=2 model standard}
\end{equation}
we need to employ the formulas (\ref{MMM tensors}) with the interaction part divided by the factor $64 \pi^2$ and put $\gamma_{crit}=0$, i.e.
\begin{align}
\tilde{f}(2)= \frac{f(2)}{64 \pi^2}=\frac{3 d(d-1)}{4 \omega^2},\\
\tilde{A}_{IJ}^{(2 \pm)}= \frac{A_{IJ}^{(2 \pm)}}{64 \pi^2}=\left(\frac{d-1}{\omega^2} \pm \omega \right)\delta_{IJ},\\
-\tilde{\beta} = \frac{3d \omega}{2}+\frac{3d(d-1)}{4\omega^2}.
\label{su2 tensors}
\end{align}
Therefore the optimized frequency $\tilde{\omega}$ solving $\frac{\partial \tilde{\beta}}{\partial \omega}$, resp. $\tilde{A}_{IJ}^{(2 -)}=0$ and the variational upper bound become
\begin{align}
\tilde{\omega}=(d-1)^{\frac{1}{3}},\\
e_0^{(0)}=-\tilde{\beta}(\tilde{\omega})=\frac{9}{4}d(d-1)^{\frac{1}{3}}.
\end{align}
 The optimized Fock space decomposition for (\ref{n=2 model standard}) takes the form
 \begin{equation}
\tilde{H}_{d} = 2 \tilde{\omega}a_{ia}^{\dagger}a_{ia} + e_0^{(0)}\mathbb{I}  +:V:,
\label{n=2 model standard opt}
\end{equation}
One can perform a perturbative expansion according to (\ref{perturbative expansion}), treating  $:V:=\frac{1}{2} \epsilon_{abc} \epsilon_{ab'c'}:x_{ib}x_{ib'}x_{jc}x_{jc'}:$ as a perturbation. Table $\ref{table6}$ contains a comparison of the exact ground state energies for $d=2$, \cite{variational orthogonalization}, and $d=3$, \cite{munster}, with our computation.
\begin{table}[H]

\caption{Comparison of the ground state energy of the $d=2,3$ $SU(2)$ Matrix Model obtained perturbatively with the "exact" values $e_0^{exact}$ (obtained by the Rayleigh-Ritz method; note the rescaling by 2) with our perturbative expansion in the optimized Fock space up to $2^{nd}$ order }
\label{table6}
\begin{center}
    \begin{tabular}{ | l | l |l |  p{5cm} |}
    \hline
d &2 & 3 \\ \hline
 
 $e_0^{(0)}$ & 4.500 &8.504 \\        \hline
$e_0^{(2)}$ &4.219& 8.238\\  \hline
$e_0^{exact}$ &4.230 &8.233 \\ \hline
    \end{tabular}
    \end{center}
    \end{table}

    \section{Trivial instructive example}
    
    Let us demonstrate the technique of getting a lower bound for the ground state energy applied in Section \ref{spectral bounds MMM} and \ref{spectral bounds AO} on the following trivial example
    
    \begin{equation}
    H_N=\sum_{a=1}^{N}(\omega a^{\dagger}_a a_a+ \gamma (a_a a_a + a^{\dagger}_a a^{\dagger}_a)),
    \end{equation}
acting on the Hilbert space of $su(n)$ ($N=n^2-1$) invariant wave functions embedded naturally, as described in Section 2, in the infinite tensor product of $L^2$ spaces. Performing a Bogoliubov transformation from $a_a^{\dagger}, a_a$ to a new pair of canonical creation-annihilation operators $b_a^{\dagger},b_a$
\begin{align}
a_a^{\dagger}=\alpha b_a + \beta b_a^{\dagger},\\
a_a=\alpha b_a^{\dagger} + \beta b_a,
\end{align}
with
\begin{align}
\alpha&=\sin x\\
\beta&= \cos x,~~~~ x \in \mathbb{R},
\end{align}
allows to rewrite $H_N$ in the standard form 
\begin{equation}
    H_n=\sum_{a=1}^{N} (\omega + 4 \alpha \beta) b^{\dagger}_a b_a + (\omega \alpha^2 + 2 \alpha \beta \gamma)N \mathbb{I},
    \label{correct basis}
    \end{equation}

provided $\alpha \beta = - \frac{\gamma}{\omega}$, resp. $x=- \frac{1}{2} \arcsin\left(\frac{2 \gamma}{\omega}\right)$. Therefore the rescaled ground state energy is given explicitly (as the whole spectrum of $H_N$) and it satisfies the following 
\begin{align}
e_0 := \omega \alpha^2 + 2 \alpha \beta \gamma=- \frac{2 \gamma^2}{\omega}+ \omega [\sin(\frac{1}{2}\arcsin(\frac{2 \gamma}{\omega}))]^2 \geq -\frac{\gamma^2}{\omega}.
\label{exact energy}
\end{align}
This essentially shows what happens if one uses a 'wrong' basis for a given Hamiltonian with infinitely many degrees of freedom. The original divergent terms $a_a a_a + a^{\dagger}_a a^{\dagger}_a \propto O(n)$ can be eliminated after switching to the correct Fock space generated by $b_a^{\dagger}$ at the cost of producing a divergent multiple of the identity operator as in (\ref{correct basis}).

Another method of getting a lower bound for $e_0$ is based on our considerations from Section \ref{spectral bounds MMM} and \ref{spectral bounds AO}. Let us introduce a pair of composite creation-annihilation operators 
\begin{align}
A&:= a_a a_a \\
[A,A^{\dagger}]&= 2N \mathbb{I}+O(1).
\end{align}

Using the same argument as in Lemma 2, one can show that at large $n$
\begin{align}
a_a^{\dagger}a_a \simeq \frac{1}{N} A^{\dagger} A + \sum_{\lambda_2=0}|\lambda|P_{\lambda}
\end{align}
hence
\begin{align}
H_N \simeq \frac{\omega}{N} A^{\dagger} A + \gamma(A^{\dagger} + A) +  \omega\sum_{\lambda_2=0}|\lambda|P_{\lambda}.
\label{ham trivial}
\end{align}
Let us introduce a new operator $B= A - \epsilon \mathbb{I}\ $, which allows to complete the first two terms of (\ref{ham trivial}) to a full square
\begin{align}
H_N \simeq \frac{\omega}{N} B^{\dagger} B  +  \omega\sum_{\lambda_2=0}|\lambda|P_{\lambda}+ \tilde{e}_0 N \mathbb{I},
\end{align}
with $\tilde{e}_0= 2 \gamma \epsilon + \frac{\omega}{N} \epsilon^2$, provided $\epsilon=-\frac{N \gamma}{\omega}$. Now it follows that $H_N \geq \tilde{e}_0 N \mathbb{I} + O(\frac{1}{n})$ and the true rescaled ground state energy is bounded from below by $\tilde{e}_0$
\begin{equation}
e_0 \geq \tilde{e}_0 = -\frac{\gamma^2}{\omega},
\end{equation}
which agrees with the exact computation (\ref{exact energy}). This shows again that a special treatment of divergent matrix elements of the Hamiltonian ($||A^{\dagger} \psi_{\lambda}|| \propto O(n)$) allows to produce a reasonable lower bound for the divergent part of the ground state energy in the planar limit, i.e. after ignoring terms of order $O(\frac{1}{n})$.

\end{document}